\newtheorem{theorem}{Theorem}
\newtheorem{lemma}{Lemma}
\newtheorem{corollary}{Corollary}
\newtheorem{definition}{Definition}
\def\BibTeX{{\rm B\kern-.05em{\sc i\kern-.025em b}\kern-.08em
    T\kern-.1667em\lower.7ex\hbox{E}\kern-.125emX}}
\begin{document}

\title{A Theory of Second-Order Wireless Network Optimization and Its Application on AoI
{\footnotesize \textsuperscript{*}}
\thanks{This material is based upon work supported in part by NSF under Award Number ECCS-2127721, in part by the U.S. Army Research Laboratory and the U.S. Army Research Office under Grant Number W911NF-18-1-0331, and in part by Office of Naval Research under Contract N00014-21-1-2385.
}
}

\author{\IEEEauthorblockN{Daojing Guo,
Khaled Nakhleh, I-Hong Hou}
\IEEEauthorblockA{Dept. of ECE, Texas A\&M University} 
Email: \{daojing\_guo, khaled.jamal, ihou\}@tamu.edu 
\and
\IEEEauthorblockN{Sastry Kompella, Clement Kam}
\IEEEauthorblockA{Naval Research Laboratory} 
Email: \{sk, ckk\}@ieee.org
}

\maketitle
\begin{abstract}
This paper introduces a new theoretical framework for optimizing second-order behaviors of wireless networks. Unlike existing techniques for network utility maximization, which only considers first-order statistics, this framework models every random process by its mean and temporal variance. The inclusion of temporal variance makes this framework well-suited for modeling stateful fading wireless channels and emerging network performance metrics such as age-of-information (AoI). Using this framework, we sharply characterize the second-order capacity region of wireless access networks. We also propose a simple scheduling policy and prove that it can achieve every interior point in the second-order capacity region. To demonstrate the utility of this framework, we apply it for an important open problem: the optimization of AoI over Gilbert-Elliott channels. We show that this framework provides a very accurate characterization of AoI. Moreover, it leads to a tractable scheduling policy that outperforms other existing work.
\end{abstract}

\section{introduction}

There are two seemingly contradictory trends happening in the field of wireless network optimization. On one hand, the study of network utility maximization (NUM) has witnessed tremendous success in the past two decades. Techniques based on dual decomposition, Lyapunov function, etc., have been shown to produce tractable and optimal solutions in complex networks for a wide range of objectives, including maximizing spectrum efficiency, minimizing power consumption, enforcing fairness among clients, and the combination of these objectives. Recent studies have also established iterative algorithms that not only converge to the optimum, but also have provably fast convergence rate \cite{huang2014power, huang2009delay, liu2016heavy, chen2017learn, liu2016achieving}. On the other hand, there have been growing interests in new performance metrics for emerging network applications, such as quality-of-experience (QoE) for the application of video streaming and age-of-information (AoI) for the application of real-time state estimation. Surprisingly, except for a few special cases, the problem of optimizing these new performance metrics remain largely open. This raises the question: Why do existing NUM techniques fail to solve the optimization problem for these new performance metrics?

The fundamental reason is that current NUM techniques are only applicable to first-order performance metrics, while emerging new performance metrics involve higher-order behaviors. Existing NUM problems typically define the utility of a flow $n$ as $U_n(x_n)$, where $x_n$ is an asymptotic first-order performance metric, such as throughput (\emph{long-term average} number of packet deliveries per unit time), power consumption (\emph{long-term average} amount of energy consumption per unit time), and channel utilization (\emph{long-term average} number of transmissions per unit time). However, emerging performance metrics like QoE and AoI require the characterization of short-term network behaviors, and hence cannot be fully captured by asymptotic first-order statistics. 

To bridge the gap between NUM techniques and emerging performance metrics, we present a new framework of second-order wireless optimization. This framework consists of the second-order models, that is, the means and the temporal variances, of all random processes, including the channel qualities and packet deliveries of wireless clients. The incorporation of temporal variances enables this framework to better characterize stateful fading wireless channels, such as Gilbert-Elliott channels, and emerging performance metrics. 

Using this framework, we sharply characterize the second-order capacity region of wireless networks, which entails the set of means and temporal variances of packet deliveries that are feasible under the constraints of the second-order models of channel qualities. As a result, the problem of optimizing emerging performance metrics is reduced to one that finds the optimal means and temporal variances of packet deliveries within the second-order capacity region. We also propose a simple scheduling policy and show that it can achieve every interior point of the second-order capacity region.

To demonstrate the utility our framework, we apply it for an important open problem: Finding the optimal scheduling policy to minimize system-wide AoI over Gilbert-Elliott channels. We theoretically derive the closed-form expressions of the second-order models for Gilbert-Elliott channels. We also show that the AoI of each wireless client can be well-approximated by the mean and the temporal variance of its packet delivery process. We compare the system-wide AoI of our scheduling policy against other policies from recent studies on AoI minimization. Simulation results show that our policy achieves a smaller system-wide AoI. These results are especially significant when one considers that our policy is a generic second-order optimization policy, while the other policies are tailor-made to minimize the system-wide AoI. 

The rest of the paper is organized as follows: Section \ref{sec:model} formally defines the second-order models of channel qualities and packet deliveries and the problem of second-order optimization. Section \ref{sec: aoi} uses second-order models to formulate the problem of minimizing system-wide AoI over Gilbert-Elliott channels. Section \ref{sec: capacity region} derives an outer bound of the second-order capacity region. Section \ref{sec: scheduling policy} proposes a simple scheduling policy and shows that it achieves every interior point of the second-order capacity region. Section \ref{sec:simulation} presents our simulation results. Section \ref{sec:related} surveys some related studies. Finally, Section \ref{sec:conclusion} concludes the paper.
\section{System Model for Second-Order Wireless Network Optimization}
\label{sec:model}

We begin by describing a generic network optimization problem. Consider a wireless system where one AP serves $N$ clients, numbered as $\{1,2,\dots, N\}$. Time is slotted and denoted by $t = 1,2,3,\dots.$ We consider the ON-OFF channel model where the AP can schedule a client for transmission if and only if the channel for the client is ON. Let $X_i(t)$ be the indicator function that the channel for client $i$ is ON at time $t$. We assume that the sequence $\{X_i(1), X_i(2),\dots\}$ is governed by a stochastic positive-recurrent Markov process with finite states. In each time slot, if there is at least one client having an ON channel, then the AP selects a client with an ON channel and transmits a packet to it. Let $Z_i(t)$ be the indicator function that client $i$ receives a packet at time $t$. The empirical performance of client $i$ is modeled as a function of the entire sequence $\{Z_i(1), Z_i(2),\dots\}$. We note that the performance model is very general and covers virtually all existing network performance metrics, including both traditional ones like throughput and emerging ones like AoI. The network optimization problem is to find a scheduling policy that maximizes the total performance of the network.

Solving this generic network optimization problem is difficult because it requires solving an $N$-dimensional Markov decision process. As a result, except for a few special cases, there remains no tractable optimal solutions for many emerging network performance metrics like AoI. To circumvent this challenge, we propose capturing each random process by its second-order model, namely, its mean and temporal variance.

We first define the second-order model for channels. With a slight abuse of notations, let $X_S(t):=\max\{X_i(t)|i\in S\}$ be the indicator function that at least one client in $S$ has an ON channel at time $t$. Since all channels are governed by stochastic positive-recurrent Markov processes, the strong law of large numbers for Markov chains states that $\frac{\sum_{t=1}^TX_S(t)}{T}$ converges to a constant almost surely as $T\rightarrow\infty$. Hence, we can define the mean of $X_S$ as
\begin{equation}
    m_S:=\lim_{T\rightarrow\infty}\frac{\sum_{t=1}^TX_S(t)}{T}.
\end{equation}
The Markov central limit theorem further states that $\frac{\sum_{t=1}^TX_S(t)-Tm_S}{\sqrt{T}}$ converges in distribution to a Gaussian random variable as $T\rightarrow\infty$. Hence, we define the temporal variance of $X_S$ as
\begin{equation}
    v_S^2:=E[(\lim_{T\rightarrow\infty}\frac{\sum_{t=1}^TX_S(t)-Tm_S}{\sqrt{T}})^2].
\end{equation}
The second-order channel model is then expressed as the collection of the means and temporal variances of all $X_S$, namely, $\{(m_S, v_S^2)|S\subseteq\{1,2,\dots,N\}\}$.

The second-order model for packet deliveries is defined similarly. Assuming that the AP's scheduling policy is ergodic, we can define the mean and the temporal variance of $Z_i$ as
\begin{equation}
    \mu_i:=\lim_{T\rightarrow\infty}\frac{\sum_{t=1}^TZ_i(t)}{T}, \sigma_i^2:=E[(\lim_{T\rightarrow\infty}\frac{\sum_{t=1}^TZ_i(t)-T\mu_i}{\sqrt{T}})^2].
\end{equation}
The second-order delivery model is $\{(\mu_i,\sigma_i^2)|1\leq i \leq N\}$. The performance a client $i$ is modeled as a function of $(\mu_i,\sigma_i^2)$, which we denote by $F_i(\mu_i,\sigma_i^2)$.

Since clients want to have large means and small variances for their delivery processes, we define the second-order capacity region of a network as follows:
\begin{definition}
[Second-order capacity region] Given a second-order channel model $\{(m_S, v_S^2)|S\subseteq\{1,2,\dots,N\}\}$, the second-order capacity region is the set of all $\{(\mu_i,\sigma_i^2)|1\leq i \leq N\}$ such that there exists a scheduling policy under which $\lim_{T\rightarrow\infty}\frac{\sum_{t=1}^TZ_i(t)}{T}=\mu_i$ and $E[(\lim_{T\rightarrow\infty}\frac{\sum_{t=1}^TZ_i(t)-T\mu_i}{\sqrt{T}})^2]\leq\sigma_i^2,\forall i$. $\Box$
\end{definition}

The second-order network optimization problem entails finding the scheduling policy that maximizes $\sum_{i=1}^NF_i(\mu_i,\sigma_i^2)$.

\section{The Second-Order Model for AoI Optimization over Gilbert-Elliott Channels} \label{sec: aoi}

To demonstrate the utility of our second-order models, we derive the second-order models for an important, but unsolved, problem: the optimization of AoI over Gilbert-Elliott channels.

\subsection{The Second-Order Model of Gilbert-Elliott Channels}

\begin{figure}
\centering
\includegraphics[width=2in]{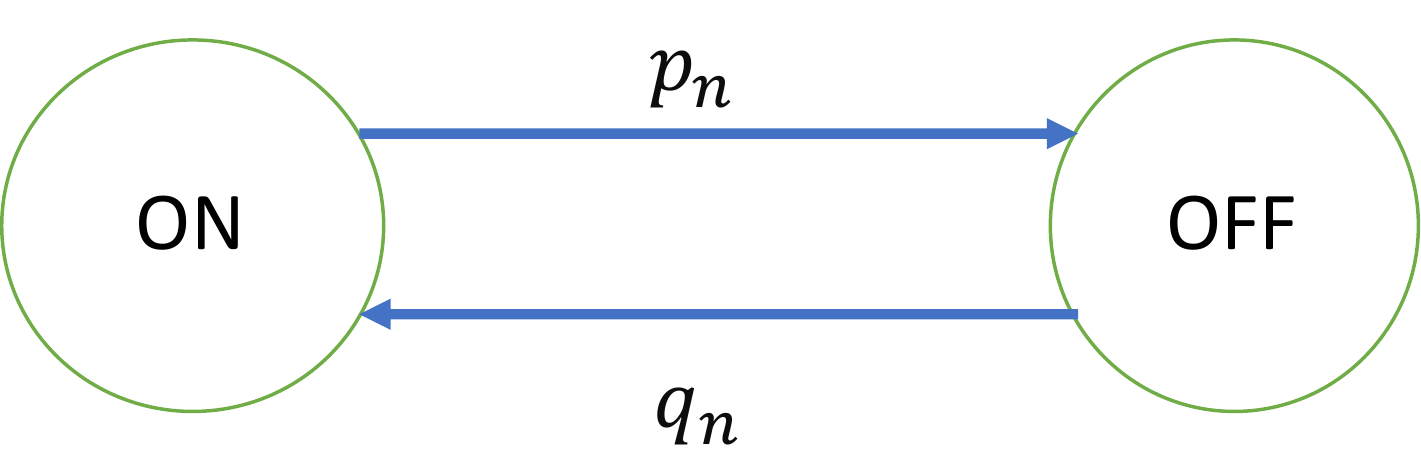}
\caption{The Gilbert-Elliott Model}\label{fig:GE} 
\vspace{-5pt}
\end{figure}

In Gilbert-Elliott channels \cite{Gilbert1960GEmodel,Elliot1963GEmodel}, the channel for each client $i$ is modeled as a two-state Markov process, as shown in Fig. ~\ref{fig:GE}. The channel is ON if it is in the good (G) state, and is OFF if it is in the bad (B) state. The transition probabilities from G to B and from B to G are $p_i$ and $q_i$, respectively. The channels are independent from each other.

We now show the second-order model of Gilbert-Elliott channels.
\begin{theorem} \label{theorem:second-order-GE}
Under the Gilbert-Elliott channels, for all $S$,
\begin{align}
    m_S=& 1-\prod_{i\in S}\frac{p_i}{p_i+q_i},\label{eq:GE mean}\\ 
    v_S^2=&2\sum_{k=1}^\infty\Big(\prod_{i\in S}G_i(k+1)-\prod_{i\in S}\frac{p_i}{p_i+q_i}\Big)\prod_{i\in S}\frac{p_i}{p_i+q_i}\nonumber\\
    &+\prod_{i\in S}\frac{p_i}{p_i+q_i}-(\prod_{i\in S}\frac{p_i}{p_i+q_i})^2,\label{eq:GE variance}
\end{align}
where $G_i(k)=\frac{p_i}{p_i+q_i}+\frac{q_i}{p_i+q_i}(1-p_i-q_i)^{k-1}$.
\end{theorem}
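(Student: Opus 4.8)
The plan is to treat $X_S(t)=\max_{i\in S}X_i(t)$ as a bounded function of the product Markov chain $W(t):=(X_i(t))_{i\in S}$, which is positive recurrent and (under the usual nondegeneracy, $p_i,q_i\in(0,1]$ with $p_i+q_i<2$) aperiodic, hence geometrically ergodic. For such a function, started in stationarity, the mean and temporal variance appearing in the definitions of Section~\ref{sec:model} are the ergodic average $E_\pi[X_S]$ and the time-average variance constant $\mathrm{Var}_\pi(X_S(1))+2\sum_{k=1}^\infty\mathrm{Cov}_\pi(X_S(1),X_S(1+k))$. I will establish \eqref{eq:GE mean} and \eqref{eq:GE variance} by identifying these two closed forms, the key simplification throughout being the independence of the channels across $i$, which lets every joint quantity factor over $S$.

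For \eqref{eq:GE mean}: the two-state chain of client $i$ has stationary probability $\frac{p_i}{p_i+q_i}$ of being in the bad (OFF) state, so by independence $P_\pi(X_S(t)=0)=\prod_{i\in S}\frac{p_i}{p_i+q_i}$ and $m_S=E_\pi[X_S]=1-\prod_{i\in S}\frac{p_i}{p_i+q_i}$; the strong law of large numbers for Markov chains supplies the almost-sure convergence in the definition. For \eqref{eq:GE variance}: the first term is $\mathrm{Var}_\pi(X_S(1))=m_S(1-m_S)=\prod_{i\in S}\frac{p_i}{p_i+q_i}-\big(\prod_{i\in S}\frac{p_i}{p_i+q_i}\big)^2$, which is the last line of \eqref{eq:GE variance}. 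For the autocovariances, write $X_S(t)=1-\prod_{i\in S}(1-X_i(t))$, expand $X_S(1)X_S(1+k)$ and take expectations; by independence the only nontrivial factor is, for each $i$, $P_\pi(X_i(1)=0,X_i(1+k)=0)=\frac{p_i}{p_i+q_i}[A_i^k]_{BB}$, where $[A_i^k]_{BB}$ is the $k$-step return probability to the bad state. Diagonalizing the $2\times2$ transition matrix (eigenvalues $1$ and $1-p_i-q_i$) gives $[A_i^k]_{BB}=\frac{p_i}{p_i+q_i}+\frac{q_i}{p_i+q_i}(1-p_i-q_i)^k=G_i(k+1)$. Substituting, $E_\pi[X_S(1)X_S(1+k)]=1-2\prod_{i\in S}\frac{p_i}{p_i+q_i}+\prod_{i\in S}\frac{p_i}{p_i+q_i}\prod_{i\in S}G_i(k+1)$, so $\mathrm{Cov}_\pi(X_S(1),X_S(1+k))=\prod_{i\in S}\frac{p_i}{p_i+q_i}\big(\prod_{i\in S}G_i(k+1)-\prod_{i\in S}\frac{p_i}{p_i+q_i}\big)$; multiplying by $2$, summing over $k\ge1$, and adding the variance term reproduces \eqref{eq:GE variance} term by term.

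The algebra above is routine; the real work is the analytic justification that $v_S^2:=E\big[(\lim_{T}\tfrac{\sum_{t=1}^TX_S(t)-Tm_S}{\sqrt T})^2\big]$ equals the time-average variance constant. This requires the Markov central limit theorem (already invoked in Section~\ref{sec:model}), together with uniform integrability of $\big(\tfrac{\sum_t X_S(t)-Tm_S}{\sqrt T}\big)^2$ — immediate since $X_S$ is bounded — to upgrade convergence in distribution to convergence of second moments, plus the elementary identity $\tfrac1T\mathrm{Var}\big(\sum_{t=1}^TX_S(t)\big)=\mathrm{Var}_\pi(X_S(1))+2\sum_{k=1}^{T-1}(1-\tfrac kT)\mathrm{Cov}_\pi(X_S(1),X_S(1+k))\to\mathrm{Var}_\pi(X_S(1))+2\sum_{k=1}^\infty\mathrm{Cov}_\pi(X_S(1),X_S(1+k))$ by dominated convergence, where the dominating summability comes from the geometric decay of $\mathrm{Cov}_\pi(X_S(1),X_S(1+k))$ inherited from the factors $(1-p_i-q_i)^k$. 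One bookkeeping point to keep straight is the index offset, $[A_i^k]_{BB}=G_i(k+1)$ with $G_i(1)=1$ corresponding to the $k=0$ (identity) case, which is precisely what produces the $G_i(k+1)$ inside the sum in \eqref{eq:GE variance}.
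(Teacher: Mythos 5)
Your proposal is correct and follows essentially the same route as the paper: stationary probabilities plus independence give the mean, and the asymptotic-variance formula $\mathrm{Var}_\pi+2\sum_k\mathrm{Cov}_\pi$ with the factorized $k$-step OFF-state probability (the paper's $G_i(k+1)$, which you obtain by diagonalization where the paper solves a recursion, and which the paper phrases via $Y_S=1-X_S$ rather than $X_S$ directly) gives the temporal variance. Your added justification that the second moment of the CLT limit equals this constant (uniform integrability plus geometric covariance decay) is a more careful treatment of a step the paper simply asserts via the Markov central limit theorem.
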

\begin{proof}
Let $Y_i(t):= 1- X_i(t)$ be the indicator function that client $i$ has an OFF channel at time $t$. Let $Y_S(t):=1-X_S(t)$ be the indicator function that all clients in the subset $S$ have OFF channels at time $t$. Hence, we have $Y_S(t)=\prod_{i\in S}Y_i(t)$. Suppose the Markov process of each channel is in the steady-state at time $t$, then we have $Prob(Y_i(t)=1)=\frac{p_i}{p_i+q_i}$. Hence, $E[Y_S(t)]=\prod_{i\in S}\frac{p_i}{p_i+q_i}$ and $E[X_S(t)] = 1-E[Y_S(t)]=1-\prod_{i\in S}\frac{p_i}{p_i+q_i}$. This establishes (\ref{eq:GE mean}).

Next, we establish (\ref{eq:GE variance}). We have $(\sum_{t=1}^TX_S(t)-Tm_S)^2=(\sum_{t=1}^TY_S(t)-T(1-m_S))^2$. By the Markov central limit theorem, we can calculate $v_S^2$ by assuming that the Markov process of each channel is in the steady-state at time $1$ and using the following formula:
\begin{equation}
    v_S^2=Var(Y_S(1))+2\sum_{k=1}^\infty Cov(Y_S(1), Y_S(1+k)).
\end{equation}
Since $Y_S(1)$ is a Bernoulli random variable with mean $\prod_{i\in S}\frac{p_i}{p_i+q_i}$, we have 
\begin{equation}Var(Y_S(1))=\prod_{i\in S}\frac{p_i}{p_i+q_i}-(\prod_{i\in S}\frac{p_i}{p_i+q_i})^2. \label{eq:Y_variance}
\end{equation}

Let $G_i(k)=Prob(Y_i(k)=1|Y_i(1)=1)$. Then,
\begin{align}
    &E[Y_S(1)Y_S(1+k)]\notag\\
    =&Prob(Y_S(1+k)=1|Y_S(1)=1)\times Prob(Y_S(1)=1)\notag\\
    =&Prob(Y_i(1+k)=1,\forall i\in S|Y_i(1)=1, \forall i\in S)\prod_{i\in S}\frac{p_i}{p_i+q_i}\notag\\
    =&\prod_{i\in S}G_i(k+1)\prod_{i\in S}\frac{p_i}{p_i+q_i},
\end{align}
and
\begin{align}
    &Cov(Y_S(1), Y_S(1+k))\notag\\
    =&E[Y_S(1)Y_S(1+k)]-E[Y_S(1)]E[Y_S(1+k)]\notag\\
    =&\Big(\prod_{i\in S}G_i(k+1)-\prod_{i\in S}\frac{p_i}{p_i+q_i}\Big)\prod_{i\in S}\frac{p_i}{p_i+q_i}\label{eq:Y_cov}
\end{align}
Combining (\ref{eq:Y_variance}) and (\ref{eq:Y_cov}) establishes (\ref{eq:GE variance}).

It remains to find the closed-form expression of $G_i(k)$. We have
\begin{align}
    &G_i(k)=Prob(Y_i(k)=1|Y_i(1)=1)\notag\\
    =&G_i(k-1)(1-q_i)+(1-G_i(k-1))p_i\notag\\
    =&p_i+(1-p_i-q_i)G_i(k-1),
\end{align}
if $k>1$, and $G_i(k)=1$, if $k=1$. Solving this recursive equation yields $G_i(k)=\frac{p_i}{p_i+q_i}+\frac{q_i}{p_i+q_i}(1-p_i-q_i)^{k-1}$. This completes the proof.
\end{proof}

When $p_i+q_i=1$, the Gilbert-Elliott channel reduces to the i.i.d. channel model where $X_i(t)=1$ with probability $q_i$, independent from any prior events. By replacing $p_i=1-q_i$, we obtain the second-order model of i.i.d. channels as below:
\begin{corollary}
Under the i.i.d. channels with $Prob(X_i(t)=1)=q_i$,
\begin{align}
    m_S=& 1-\prod_{i\in S}(1-q_i), 
    v_S^2=& \prod_{i\in S}(1-q_i)-\prod_{i\in S}(1-q_i)^2,
\end{align}
for all $S$. $\Box$
\end{corollary}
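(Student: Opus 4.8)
The plan is to obtain this as a direct specialization of Theorem~\ref{theorem:second-order-GE}, using the fact that the constraint $p_i+q_i=1$ is precisely what makes the Gilbert--Elliott chain for client $i$ memoryless, i.e., $X_i(t)=1$ with probability $q_i$ independently across $t$. Under the substitution $p_i=1-q_i$ we have $\frac{p_i}{p_i+q_i}=1-q_i$, so the mean formula (\ref{eq:GE mean}) immediately gives $m_S=1-\prod_{i\in S}(1-q_i)$.

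For the variance, the key simplification is that $1-p_i-q_i=0$, so in $G_i(k)=\frac{p_i}{p_i+q_i}+\frac{q_i}{p_i+q_i}(1-p_i-q_i)^{k-1}$ the geometric term vanishes for every $k\geq 2$, leaving $G_i(k)=1-q_i=\frac{p_i}{p_i+q_i}$. In the variance formula (\ref{eq:GE variance}) the function $G_i$ is only ever evaluated at arguments $k+1$ with $k\geq 1$, hence at arguments $\geq 2$, so $\prod_{i\in S}G_i(k+1)=\prod_{i\in S}\frac{p_i}{p_i+q_i}$ for every $k$, and every term of the infinite sum is zero. What remains is $v_S^2=\prod_{i\in S}\frac{p_i}{p_i+q_i}-\bigl(\prod_{i\in S}\frac{p_i}{p_i+q_i}\bigr)^2$, which equals $\prod_{i\in S}(1-q_i)-\prod_{i\in S}(1-q_i)^2$ after using $\bigl(\prod_{i\in S}(1-q_i)\bigr)^2=\prod_{i\in S}(1-q_i)^2$.

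I expect no real obstacle here: the only subtlety is the convention for $(1-p_i-q_i)^{k-1}$ at $k=1$ (a $0^0$), but since the variance sum only touches indices $k+1\geq 2$, this value never enters the computation. As an independent check, one could instead argue from first principles: under i.i.d.\ channels $Y_S(t)=\prod_{i\in S}Y_i(t)$ is Bernoulli with parameter $\prod_{i\in S}(1-q_i)$ and the variables $\{Y_S(t)\}_t$ are independent, so all covariance terms in the Markov-CLT variance formula $v_S^2=Var(Y_S(1))+2\sum_{k\geq 1}Cov(Y_S(1),Y_S(1+k))$ drop out and $v_S^2$ reduces to the single-slot Bernoulli variance, yielding the same expression.
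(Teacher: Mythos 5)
Your proposal is correct and follows essentially the same route as the paper, which likewise obtains the corollary by substituting $p_i=1-q_i$ into Theorem~\ref{theorem:second-order-GE}, noting that $\frac{p_i}{p_i+q_i}=1-q_i$ and that the factor $(1-p_i-q_i)^{k}$ kills every covariance term in (\ref{eq:GE variance}). Your additional first-principles check via independence of the $Y_S(t)$ is consistent but not needed.
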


\subsection{The Second-Order Model of AoI Optimization}

Age-of-Information (AoI) has been proposed to model the performance of real-time remote sensing applications, where a controller is obtaining status updates from a number of sensors. In a nutshell, the AoI corresponding to a sensor at a given time is defined as the age of the newest information update that it has ever delivered to the controller. In terms of our network model, the AP is the controller and each client is a sensor.

Similar to the case studied in \cite{kadota2019minimizing}, we consider that each sensor $i$ generates new updates by a Bernoulli random process. In each time slot $t$, sensor $i$ generates a new update with probability $\lambda_i$, independent from any prior events. To minimize AoI, each sensor only keeps the most recent update in its memory, and it transmits the most recent update whenever it is scheduled for transmission. In other words, a sensor discards all its prior updates every time it generates a new update. The prior work \cite{kadota2019minimizing} considers that the controller knows when each sensor generates a new update. In practice, however, the controller cannot know whether a sensor has generated a new update until it schedules the sensor for transmission. In this paper, we further address the issue that the controller only knows $\lambda_i$ but not the exact times at which sensors generate new updates. Hence, we assume that the scheduling decision is independent from update generations.

Let $A_i(n):=\min\{\tau|\sum_{t=1}^\tau Z_i(t)=n\}$ be the time of the $n$-th delivery for client $i$, and let $B_i(n) := A_i(n+1)-A_i(n)$ be the time between the $n$-th and the $(n+1)$-th deliveries. Since scheduling decisions are independent from update generations, we have the following:
\begin{lemma}
If $\{B_i(0), B_i(1), \dots\}$ is independent from the update generation processes of sensor $i$, then the long-term average AoI of sensor $i$ is
\begin{equation}
    \overline{AoI}_i=\frac{E[B_i^2]}{2E[B_i]} + \frac{1}{\lambda_i}-\frac{1}{2},
\end{equation}
where $E[B_i^2]:=\lim_{m\rightarrow\infty}\sum_{n=1}^m B_i(n)^2/m$ and $E[B_i]:=\lim_{m\rightarrow\infty}\sum_{n=1}^m B_i(n)/m$.
\end{lemma}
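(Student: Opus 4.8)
The plan is to integrate the AoI sample path of sensor $i$ over time and split the area contributed by each inter-delivery interval into a ``staircase'' part and an ``offset'' part. Write $h_i(t)$ for the AoI of sensor $i$ at slot $t$; it increases by one every slot and, at a delivery slot $A_i(n)$, drops to the age of the update just delivered, which I denote $D_i(n):=h_i(A_i(n))$. Then for $A_i(n)\le t<A_i(n+1)$ one has $h_i(t)=\big(t-A_i(n)\big)+D_i(n)$, so $\sum_{t=A_i(n)}^{A_i(n+1)-1}h_i(t)=\tfrac12 B_i(n)\big(B_i(n)-1\big)+B_i(n)D_i(n)$. Summing over $n=1,\dots,m$, dividing by $\sum_{n=1}^{m}B_i(n)=A_i(m+1)-A_i(1)$, and absorbing the a.s.\ finite transient before the first delivery and the final partial interval into error terms that vanish after normalization, I obtain
\begin{equation*}
\overline{AoI}_i=\lim_{m\to\infty}\frac{\sum_{n=1}^{m}\big(\tfrac12 B_i(n)\big(B_i(n)-1\big)+B_i(n)D_i(n)\big)}{\sum_{n=1}^{m}B_i(n)} .
\end{equation*}
(Since $\{A_i(n)\}$ need not be a renewal process, I avoid renewal--reward and instead rely on the standing ergodicity assumption and the Ces\`aro definitions of $E[B_i]$ and $E[B_i^2]$.) The staircase part, $\sum_{n}\tfrac12 B_i(n)\big(B_i(n)-1\big)$ over $\sum_n B_i(n)$, converges straight from those definitions to $\frac{E[B_i^2]-E[B_i]}{2E[B_i]}=\frac{E[B_i^2]}{2E[B_i]}-\frac12$. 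Everything then reduces to showing that the offset part, $\sum_{n}B_i(n)D_i(n)$ over $\sum_n B_i(n)$, tends to $1/\lambda_i$.

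The next step is to identify $D_i(n)$. Because sensor $i$ retains only its most recent update, the packet delivered at $A_i(n)$ is the one produced at the last slot $\le A_i(n)$ in which an update was generated, so $D_i(n)$ is the backward recurrence time of the generation process at $A_i(n)$ (in the discrete-time convention of the AoI literature). Since updates form an i.i.d.\ Bernoulli$(\lambda_i)$ stream, $D_i(n)$ is geometrically distributed on $\{1,2,\dots\}$ with parameter $\lambda_i$, so $E[D_i(n)]=1/\lambda_i$. Crucially, the hypothesis says $\{B_i(0),B_i(1),\dots\}$ --- hence the delivery instants $A_i(n)=\sum_{k=0}^{n-1}B_i(k)$ --- are independent of the generation process, so conditioning on the entire schedule freezes the weights $B_i(n)$ while leaving $\{D_i(n)\}$ a sequence of geometric$(\lambda_i)$ variables of common mean $1/\lambda_i$. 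A weighted law of large numbers then gives $\frac1m\sum_{n=1}^{m}B_i(n)D_i(n)\to\frac1{\lambda_i}E[B_i]$, so the offset part converges to $1/\lambda_i$; adding the two parts yields the claim.

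I expect the weighted law of large numbers in that last step to be the one point requiring real care, because the $D_i(n)$ are not independent: when $B_i(n)$ is small, $D_i(n)$ and $D_i(n+1)$ read overlapping stretches of the generation process. The fix is to note that, conditioned on the schedule, $D_i(n)$ and $D_i(n+j)$ are independent on the event that at least one update is generated strictly between $A_i(n)$ and $A_i(n+j)$, an event whose complement has conditional probability at most $(1-\lambda_i)^{j}$; a short estimate using geometric tails then bounds $\big|\mathrm{Cov}(D_i(n),D_i(n+j)\mid\text{schedule})\big|$ by a sequence summable in $j$, uniformly in $n$. Combined with $\frac1m\sum_{n=1}^{m}B_i(n)^2\to E[B_i^2]<\infty$, this makes the conditional variance of $\frac1m\sum_{n=1}^{m}B_i(n)\big(D_i(n)-\tfrac1{\lambda_i}\big)$ tend to $0$, and a standard subsequence plus Borel--Cantelli argument upgrades this to the almost-sure limit used above. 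The remaining points --- a.s.\ finiteness of $A_i(1)$, negligibility of the terminal partial interval, and existence of the Ces\`aro limits $E[B_i],E[B_i^2]$ --- all follow from the assumed positive recurrence and ergodicity of the delivery process.
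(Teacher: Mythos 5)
Your argument is correct and is essentially the proof the paper omits: the paper merely points to the area-decomposition technique of Proposition 2 in Kadota et al.\ combined with the independence of $\{B_i(n)\}$ from the update-generation process, and your staircase-plus-offset decomposition with Ces\`aro limits, together with the conditional weighted law of large numbers (covariance decay at rate $(1-\lambda_i)^j$, finiteness of $\lim_m \frac{1}{m}\sum_n B_i(n)^2$) that handles the correlated ages-at-delivery $D_i(n)$, is exactly that combination carried out in full. The only point to make explicit in a write-up is the timing convention under which $D_i(n)$ is geometric on $\{1,2,\dots\}$ with mean $1/\lambda_i$ (the age of a delivered update is at least one slot); under the alternative convention the additive constant shifts by one, and it is the stated convention that yields the paper's $+\frac{1}{\lambda_i}-\frac{1}{2}$, which you correctly match.
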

\begin{proof}
This lemma can be established by combining techniques in the proof of Proposition 2 in \cite{kadota2019minimizing} and the fact that $B_i(n)$ is independent from update generations. The complete proof is omitted due to space limitation.
\end{proof}

We aim to express $\overline{AoI}_i$ as a function of the second-order delivery model of client $i$, $(\mu_i, \sigma_i^2)$. Since there can be multiple sequences of $\{Z_i(1), Z_i(2),\dots\}$ with the same $(\mu_i, \sigma_i^2)$, we will derive $\overline{AoI}_i$ with respect to a \emph{second-order reference delivery process} as defined below.

Let $BM_{\mu_i,\sigma_i^2}(t)$ be a Brownian motion random process with mean $\mu_i$ and variance $\sigma_i^2$. An important property of the Brownian motion random process is that for any $t_1<t_2$, $BM_{\mu_i,\sigma_i^2}(t_1)-BM_{\mu_i,\sigma_i^2}(t_2)$ is a Gaussian random variable with mean $(t_2-t_1)\mu_i$ and variance $(t_2-t_1)\sigma_i^2$. Our goal is to define a sequence $\{Z'_i(1), Z'_i(2),\dots\}$ such that $\sum_{\tau=1}^t Z'_i(\tau)\approx BM_{\mu_i,\sigma_i^2}(t)$.

\begin{definition}
Given $(\mu_i, \sigma_i^2)$, the second-order reference delivery process, denoted by $\{Z'_i(1), Z'_i(2),\dots\}$ is defined to be
\begin{equation}
    Z'_i(t)=\left\{\begin{array}{ll}1&\mbox{if $BM_{\mu_i,\sigma_i^2}(t)-BM_{\mu_i,\sigma_i^2}(t^-)\geq 1$,}\\
    0&\mbox{else,}
    \end{array}
    \right.\label{eq:Z'def}
\end{equation}
where $t^-:=\max\{\tau|\tau<t, Z'_i(\tau)=1\}$. $\Box$
\end{definition}

We now derive $\overline{AoI}_i$ with respect to the sequence $\{Z'_i(1), Z'_i(2),\dots\}$. Consider the time between the $n$-th and the $(n+1)$-th deliveries, which is denoted by $B_i(n)$, under the sequence $\{Z'_i(1), Z'_i(2),\dots\}$. From (\ref{eq:Z'def}), $B_i(n)$ can be approximated by the amount of time needed for the Brownian motion random process to increase by 1, which is equivalent to the first-hitting time for a fixed level 1 and we denote it by $H_i$. It has been shown that the the first-hitting time for a fixed level 1 follows the inverse Gaussian distribution $IG(\frac{1}{\mu_i}, \frac{1}{\sigma_i^2})$ \cite{Schrodinger1915, folks1978inverse}. Hence, we have $E[H_i]=1/\mu_i$ and $E[H_i^2]=\sigma_i^2/\mu_i^3+1/\mu_i^2$. We now have
\begin{align}
    &\overline{AoI}_i=\frac{E[B_i^2]}{2E[B_i]}+\frac{1}{\lambda_i}-\frac{1}{2}\nonumber\\
    \approx&\frac{E[H_i^2]}{2E[H_i]}+\frac{1}{\lambda_i}-\frac{1}{2}=\frac{1}{2}(\frac{\sigma_i^2}{\mu_i^2}+\frac{1}{\mu_i}) +\frac{1}{\lambda_i}-\frac{1}{2}. \label{eq:AoI approx}
\end{align}

\subsection{Model Validation}

We now verify whether the second-order model provides a good approximation of AoI over Gilbert-Elliott channels. We consider a system with only one client (sensor). The AP (controller) schedules the client for transmission whenever the client has an ON channel. Hence, we have $\mu_1=m_{\{1\}}$ and $\sigma_1^2=v_{\{1\}}^2$. Given, $p_1$, $q_1$, and $\lambda_1$, we can combine (\ref{eq:GE mean}), (\ref{eq:GE variance}), and (\ref{eq:AoI approx}) to obtain a theoretical approximation of the AoI. We note that (\ref{eq:GE variance}) involves a summation of infinite terms $\sum_{k=1}^\infty (G_1(k)-\frac{p_1}{p_1+q_1})$. Since $G_1(k)$ converges to $\frac{p_1}{p_1+q_1}$ exponentially fast, we replace this term with $\sum_{k=1}^{100} (G_1(k)-\frac{p_1}{p_1+q_1})$ when calculating $v_{\{1\}}^2$.

We evaluate the accuracy of the theoretical AoI over a wide range of $(p_1, q_1, \lambda_1)$. For each $(p_1, q_1, \lambda_1)$, we obtain the empirical AoI by simulation the system for 1000 runs, where each run contains 50,000 time slots. The results are shown in Fig. \ref{fig:single_client_validation}. It can be observed that the theoretical AoI is always almost identical to the empirical AoI under all settings. The largest difference between theoretical and empirical AoI among all evaluated case is only 0.00558. 

\begin{figure}[ht]

\begin{center} 
\subfigure[$q=0.2$. $\lambda = 1$.]{\includegraphics[width=1.7in, height=1.37in]{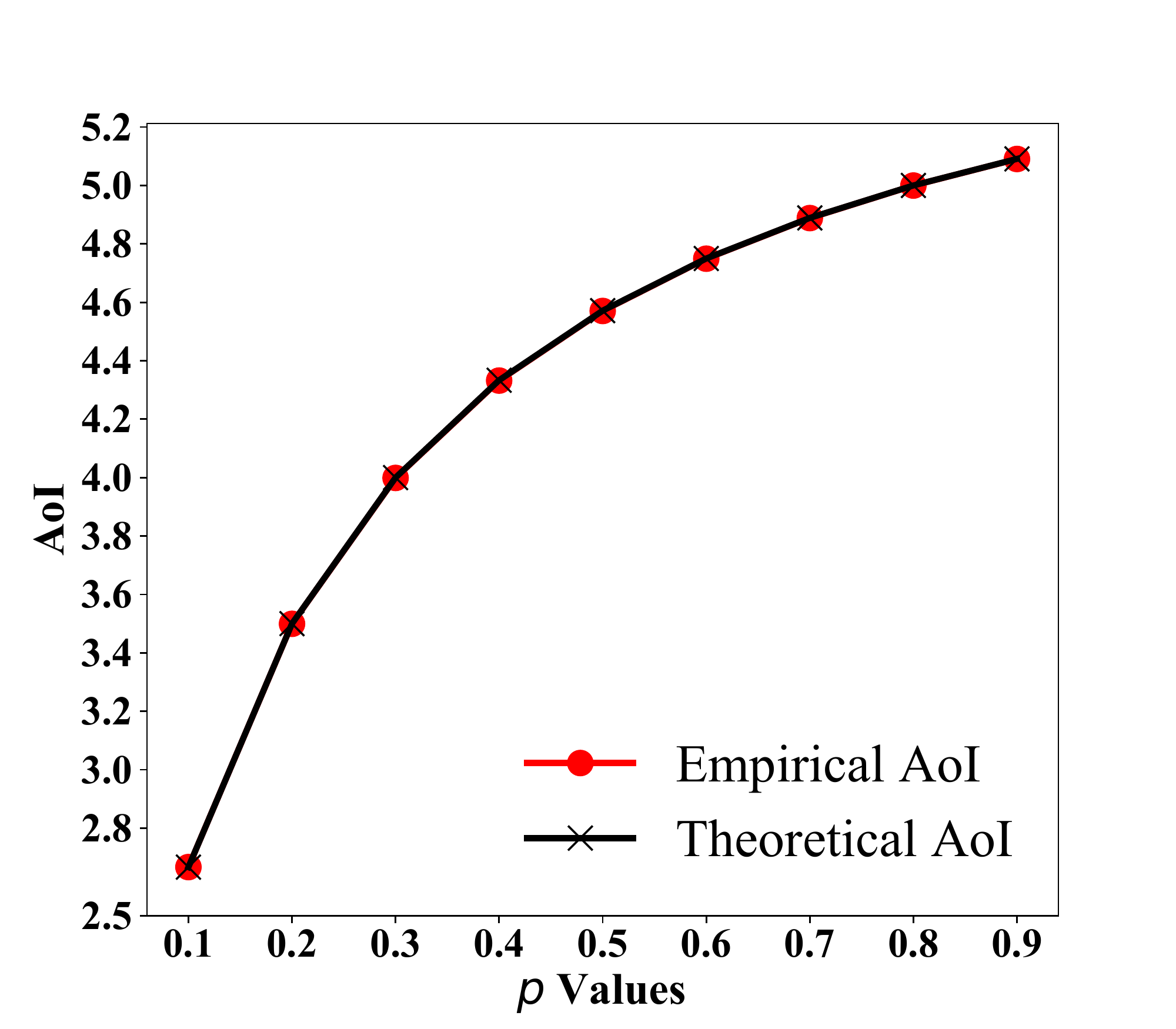}}
\subfigure[$q=0.2$. $\lambda = 0.1$.]{\includegraphics[width=1.7in, height=1.37in]{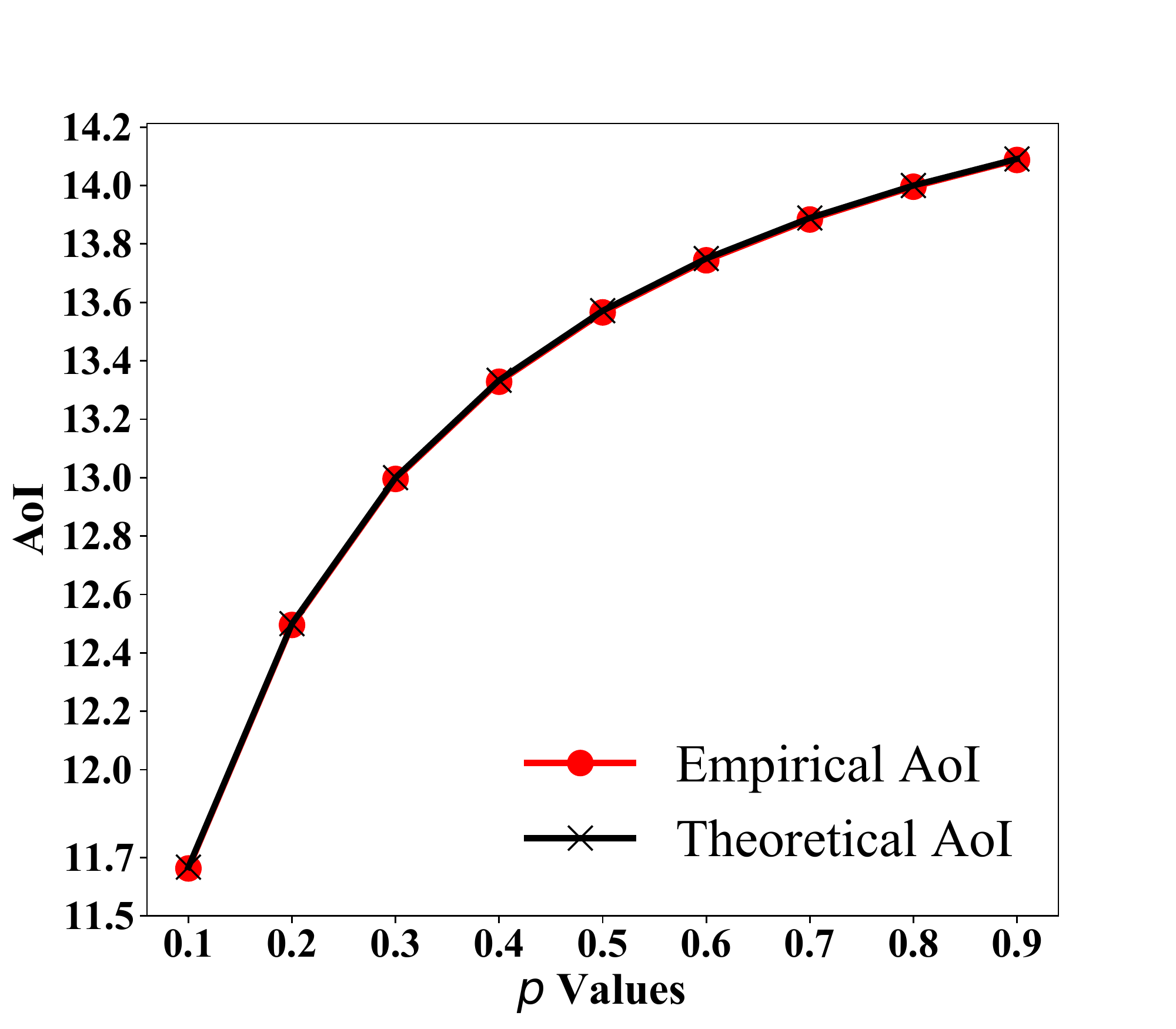}}
\subfigure[$q=0.8$. $\lambda = 1$.]{\includegraphics[width=1.7in, height=1.37in]{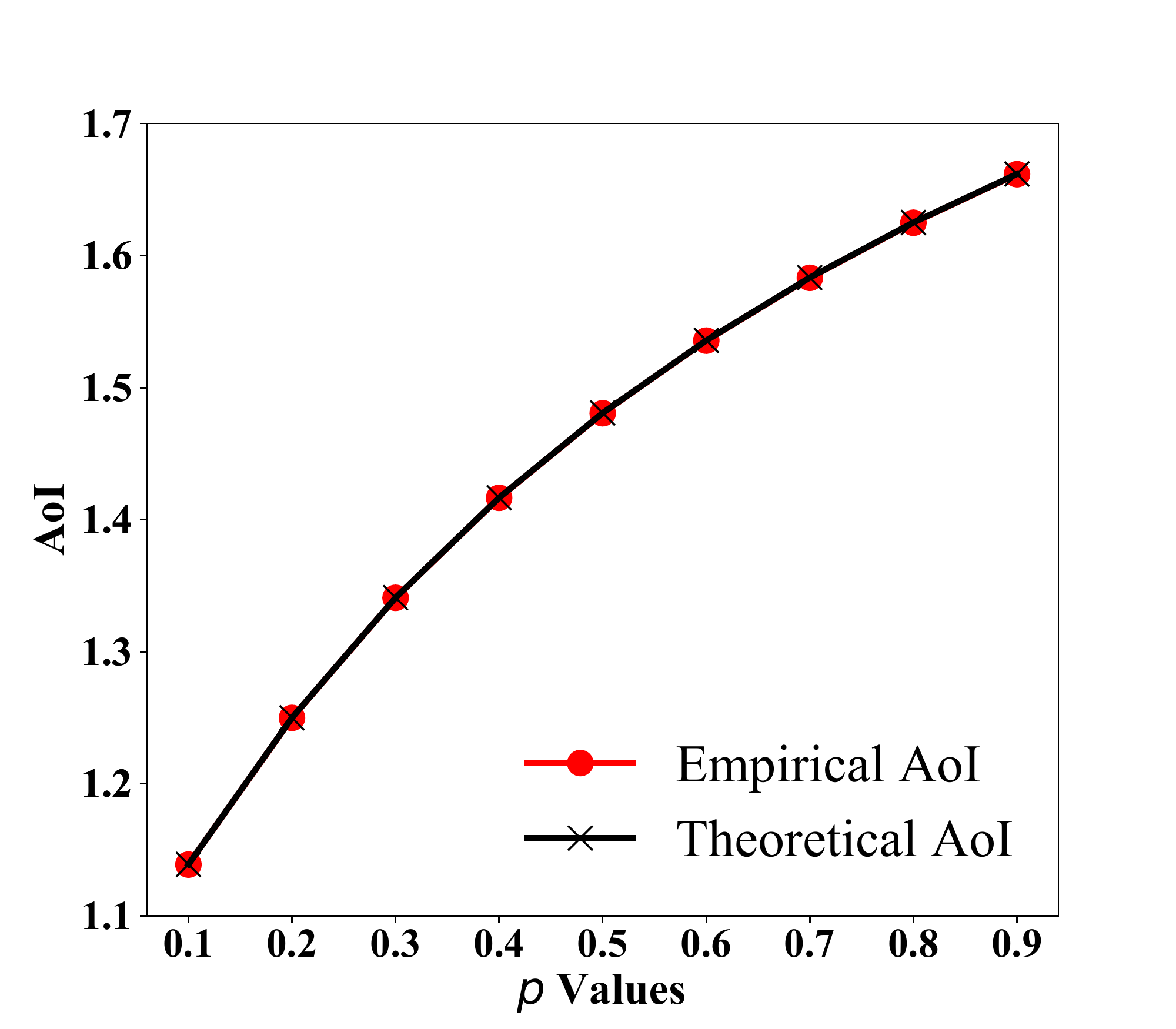}}
\subfigure[$q=0.8$. $\lambda = 0.1$.]{\includegraphics[width=1.7in, 
height=1.37in]{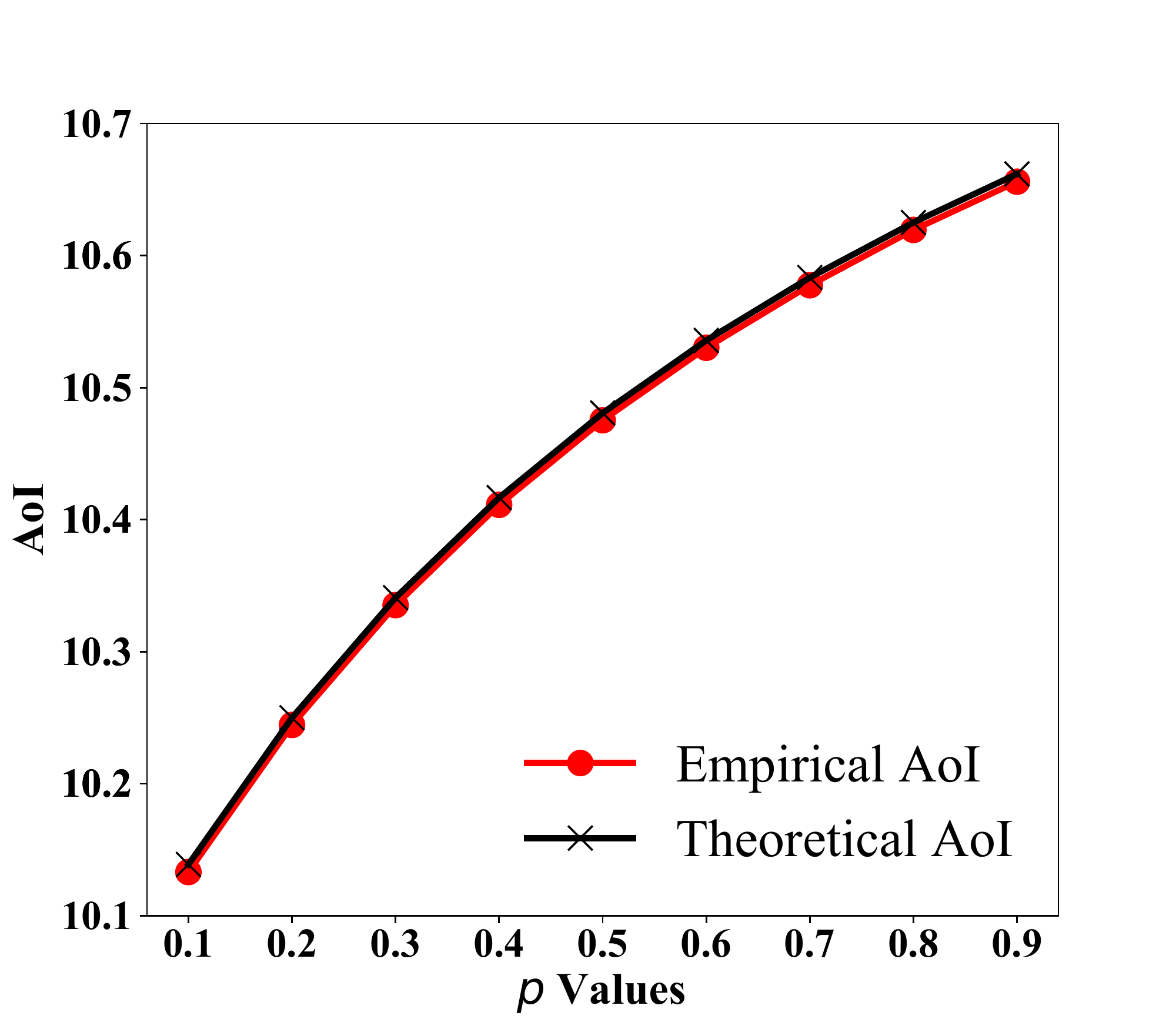}}
\end{center}
\caption{Model Validation For A Single Client.}
\label{fig:single_client_validation}
\end{figure}

\section{An Outer Bound of the Second-Order Capacity Region} \label{sec: capacity region}

In this section, we derive a necessary condition for the second-order delivery model $\{(\mu_i,\sigma_i^2)|1\leq i\leq N\}$ to be in the second-order capacity region.

\begin{theorem}\label{theorem:outer bound}
Given a second-order channel model $\{(m_S, v_S^2)|S\subseteq\{1,2,\dots,N\}\}$, a second-order delivery model $\{(\mu_i,\sigma_i^2)|1\leq i\leq N\}$ can be in the second-order capacity region only if
\begin{align}
    &\sum_{i\in S}\mu_i\leq m_S, \forall S\subseteq\{1,2,\dots,N\},\label{eq:necessary:mean}\\
    &\sum_{i=1}^N\mu_i=m_{\{1,2,\dots,N\}},\label{eq:necessary:total mean}\\
    &\sum_{i=1}^{N}\sqrt{\sigma_i^2}\geq \sqrt{v_{\{1,2,\dots,N\}}^2},\label{eq:necessary:variance}\\
    &\mu_i\geq 0, \forall i.\label{eq:necessary:non-negative}
\end{align}
\end{theorem}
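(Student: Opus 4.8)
The plan is to derive all four conditions from two elementary pathwise relations between the delivery process $\{Z_i(t)\}$ and the channel process $\{X_S(t)\}$ that hold in every slot and on every sample path. First, since the AP transmits to at most one client per slot and only to a client whose channel is ON, for every $S\subseteq\{1,\dots,N\}$ and every $t$ we have $\sum_{i\in S}Z_i(t)\le X_S(t)$: if the served client (if any) lies in $S$, that client's channel is ON so $X_S(t)=1$, and otherwise $\sum_{i\in S}Z_i(t)=0$. Second, because the policy is work-conserving, this inequality is an equality for $S=\{1,\dots,N\}$, i.e. $\sum_{i=1}^N Z_i(t)=X_{\{1,\dots,N\}}(t)$ for all $t$. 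Condition (\ref{eq:necessary:non-negative}) is immediate from $Z_i(t)\ge 0$.

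For the mean conditions I would sum $\sum_{i\in S}Z_i(t)\le X_S(t)$ over $t=1,\dots,T$, divide by $T$, and let $T\to\infty$: the left side converges almost surely to $\sum_{i\in S}\mu_i$ by the definition of $\mu_i$ (using ergodicity of the policy) and the right side to $m_S$ by the definition of $m_S$, giving (\ref{eq:necessary:mean}). The same argument applied to the equality $\sum_{i=1}^N Z_i(t)=X_{\{1,\dots,N\}}(t)$ yields (\ref{eq:necessary:total mean}).

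The variance inequality (\ref{eq:necessary:variance}) is the substantive part. Starting from $\sum_{i=1}^N Z_i(t)=X_{\{1,\dots,N\}}(t)$, summing over $t\le T$, subtracting $Tm_{\{1,\dots,N\}}=\sum_{i=1}^N T\mu_i$ (which uses (\ref{eq:necessary:total mean})), and dividing by $\sqrt{T}$ gives the exact pathwise identity
\[
\sum_{i=1}^N \frac{\sum_{t=1}^T Z_i(t) - T\mu_i}{\sqrt{T}} = \frac{\sum_{t=1}^T X_{\{1,\dots,N\}}(t) - T m_{\{1,\dots,N\}}}{\sqrt{T}}.
\]
Letting $T\to\infty$, the right side converges to a zero-mean Gaussian $\hat X$ with $E[\hat X^2]=v_{\{1,\dots,N\}}^2$ by the Markov CLT, while each left-hand summand converges to a limiting random variable $\hat Z_i$ whose second moment is at most $\sigma_i^2$ by the definition of the second-order capacity region. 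Hence $\hat X=\sum_{i=1}^N \hat Z_i$, and Minkowski's inequality gives $\sqrt{v_{\{1,\dots,N\}}^2}=\|\hat X\|_2=\big\|\sum_i \hat Z_i\big\|_2\le \sum_i \|\hat Z_i\|_2\le \sum_i \sqrt{\sigma_i^2}$, which is (\ref{eq:necessary:variance}).

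I expect the mean conditions to be routine; essentially all the difficulty is in making the limiting step of the variance argument rigorous, since the Markov CLT only asserts convergence in distribution. The clean route is to apply Minkowski's inequality to the finite-$T$ quantities first — where the displayed identity is exact — and then note that $\big\|(\sum_{t=1}^T X_{\{1,\dots,N\}}(t)-Tm_{\{1,\dots,N\}})/\sqrt{T}\big\|_2^2$ and $\big\|(\sum_{t=1}^T Z_i(t)-T\mu_i)/\sqrt{T}\big\|_2^2$ are just the normalized variances of the partial sums, which converge to $v_{\{1,\dots,N\}}^2$ and to the temporal variance of $Z_i$ (at most $\sigma_i^2$), respectively. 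This convergence of normalized variances follows from the finite-state positive-recurrent structure (geometrically decaying autocovariances) and is precisely what underlies the definition of temporal variance in Section \ref{sec:model}, so it is consistent to invoke it here.
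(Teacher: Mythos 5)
Your proposal matches the paper's proof essentially step for step: the same pathwise relations $\sum_{i\in S}Z_i(t)\le X_S(t)$ and $\sum_{i=1}^N Z_i(t)=X_{\{1,\dots,N\}}(t)$ give (\ref{eq:necessary:mean}) and (\ref{eq:necessary:total mean}), and the variance bound comes from the identity $\sum_i \hat Z_i=\hat X_{\{1,\dots,N\}}$ for the CLT-scaled limits. Your use of Minkowski's inequality is the same step the paper carries out by expanding $(\sum_i\sqrt{E[\hat Z_i^2]})^2$ and bounding the cross terms with Cauchy--Schwarz, so the arguments coincide (your remark about making the limit rigorous via finite-$T$ normalized variances is a reasonable refinement, not a different route).
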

\begin{proof}
We first establish (\ref{eq:necessary:mean}). The AP can transmit a packet to a client $i$ at time $t$ only if the client has an ON channel, that is, $X_i(t)=1$. Moreover, the AP can transmit to at most one client in each time slot. Hence, we have $\sum_{i\in S}Z_i(t)\leq X_S(t)$ under any scheduling policy. This gives us
\begin{align}
    &\sum_{i\in S}\mu_i=\lim_{T\rightarrow\infty}\frac{\sum_{i\in S}\sum_{t=1}^TZ_i(t)}{T}\nonumber\\
    \leq&\lim_{T\rightarrow\infty}\frac{\sum_{t=1}^TX_S(t)}{T}=m_S, \forall S\subseteq\{1,2,\dots, N\}.
\end{align}

We can similarly establish (\ref{eq:necessary:total mean}) by noting that $\sum_{i=1}^NZ_i(t)= X_{\{1,2,\dots,N\}}(t)$, since the AP always transmits one packet as long as at least one client has an ON channel.

Finally, we establish (\ref{eq:necessary:variance}). Let $\hat{X}_S$ be the random variable $\lim_{T\rightarrow\infty}\frac{\sum_{t=1}^TX_S(t)-Tm_S}{\sqrt{T}}$ and $\hat{Z}_i$ be the random variable $\lim_{T\rightarrow\infty}\frac{\sum_{t=1}^TZ_i(t)-T\mu_i}{\sqrt{T}}$. Since $\sum_{i=1}^NZ_i(t)= X_{\{1,2,\dots,N\}}(t)$ and (\ref{eq:necessary:total mean}), we have $\sum_{i=1}^N\hat{Z}_i=\hat{X}_{\{1,2,\dots,N\}}$. We then have
\begin{align}
    &(\sum_{i=1}^N\sqrt{\sigma_i^2})^2=(\sum_{i=1}^N\sqrt{E[\hat{Z}_i^2]})^2\nonumber\\
    =&\sum_{i=1}^NE[\hat{Z}_i^2]+2\sum_{i\neq j}\sqrt{E[\hat{Z}_i^2]E[\hat{Z}_j^2]}\nonumber\\
    \geq&\sum_{i=1}^NE[\hat{Z}_i^2]+2\sum_{i\neq j}E[\hat{Z}_i\hat{Z}_j]\quad(\mbox{Cauchy-Schwarz inequality})\nonumber\\
    =&E[(\sum_{i=1}^N\hat{Z}_i)^2]=E[\hat{X}_{\{1,2,\dots,N\}}^2]=v_{\{1,2,\dots,N\}}^2.
\end{align}
This completes the proof.
\end{proof}

\section{Scheduling Policy with Tight Inner Bound}\label{sec: scheduling policy}

In this section, we derive a sufficient condition for the second-order delivery model $\{(\mu_i,\sigma_i^2)|1\leq i\leq N\}$ to be in the second-order capacity region. We also propose a simple scheduling policy that delivers the desirable second-order delivery models as long as they satisfy the sufficient condition. We state the sufficient condition as follows:

\begin{theorem} \label{theorem:inner bound}
Given a second-order channel model $\{(m_S, v_S^2)|S\subseteq\{1,2,\dots,N\}\}$, a second-order delivery model $\{(\mu_i,\sigma_i^2)|1\leq i\leq N\}$ is in the second-order capacity region if

\begin{align}
    &\sum_{i\in S}\mu_i< m_S, \forall S\subsetneq\{1,2,\dots,N\},\label{eq:sufficient:mean}\\
    &\sum_{i=1}^N\mu_i=m_{\{1,2,\dots,N\}},\label{eq:sufficient:total mean}\\
    &\sum_{i=1}^{N}\sqrt{\sigma_i^2}\geq \sqrt{v_{\{1,2,\dots,N\}}^2},\label{eq:sufficient:variance}\\
    &\mu_i\geq 0, \sigma_i^2>0 \forall i.\label{eq:sufficient:non-negative}
\end{align}
$
\Box
$
\end{theorem}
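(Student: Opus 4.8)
The plan is to exhibit an explicit scheduling policy and verify that, under (\ref{eq:sufficient:mean})--(\ref{eq:sufficient:non-negative}), the delivery process it produces satisfies $\lim_{T\to\infty}\frac{1}{T}\sum_{t=1}^T Z_i(t)=\mu_i$ and $E[\hat Z_i^2]\le\sigma_i^2$ for every $i$, where $\hat Z_i:=\lim_{T\to\infty}\frac{1}{\sqrt T}(\sum_{t=1}^T Z_i(t)-T\mu_i)$ as in the proof of Theorem~\ref{theorem:outer bound}. The policy is built in two layers: a \emph{mean layer} that fixes the long-run service rates, and a \emph{variance-shaping layer} that inflates the per-client fluctuations up to the prescribed $\sigma_i^2$ while respecting the rigid identity $\sum_i Z_i(t)=X_{\{1,\dots,N\}}(t)$ that the ON--OFF model forces on every slot in which some client is ON.

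\textbf{Mean layer.} Conditions (\ref{eq:sufficient:mean})--(\ref{eq:sufficient:total mean}) are exactly a transportation/Hall-type feasibility condition: treating the stationary channel law as a source that, in a given slot, makes the set $A$ of currently-ON clients available with its stationary probability, a max-flow--min-cut (equivalently Hall) argument produces, for each nonempty $A$, a distribution over its members so that client $i$ is picked with long-run frequency exactly $\mu_i$; the cut across any client subset $S$ has capacity $m_S\ge\sum_{i\in S}\mu_i$, and the equality $\sum_i\mu_i=m_{\{1,\dots,N\}}$ means no service opportunity is wasted. This defines a randomized memoryless allocation rule achieving the target rates, and the \emph{strict} inequalities in (\ref{eq:sufficient:mean}) leave it interior slack, which is what permits the perturbations needed next.

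\textbf{Variance-shaping layer.} This is the heart of the proof. Since $\sum_i Z_i(t)=X_{\{1,\dots,N\}}(t)$, summing and centering gives $\sum_i\hat Z_i=\hat X_{\{1,\dots,N\}}$, a Gaussian of variance $v_{\{1,\dots,N\}}^2$, so the per-client fluctuations cannot all be small, and (\ref{eq:sufficient:variance}) is precisely the condition that makes the targets consistent with this. I would fix sharing weights $c_i\ge 0$ with $\sum_i c_i=1$ and $c_i^2 v_{\{1,\dots,N\}}^2\le\sigma_i^2$ --- such weights exist by (\ref{eq:sufficient:variance}), e.g. $c_i=\sqrt{\sigma_i^2}/\sum_j\sqrt{\sigma_j^2}$ --- and construct, for each client, a \emph{target cumulative-delivery curve}
\[
D_i(t)\approx \mu_i t + c_i\Big(\sum_{\tau=1}^t X_{\{1,\dots,N\}}(\tau)-m_{\{1,\dots,N\}}\,t\Big) + (\text{mutually independent smoothing increments}),
\]
engineered so that $\sum_i D_i(t)=\sum_{\tau=1}^t X_{\{1,\dots,N\}}(\tau)$ exactly and so that the temporal variance of $D_i$ is $\le\sigma_i^2$ (strictly below it when (\ref{eq:sufficient:variance}) is strict, which is what the independent increments consume). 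The policy is then simple: among the currently-ON clients, serve the one lying furthest below its target curve. A Lyapunov/drift analysis of the tracking errors $e_i(t):=\sum_{\tau\le t}Z_i(\tau)-D_i(t)$ --- using the strict slack in (\ref{eq:sufficient:mean}) so the targets never demand a client be served faster than its channel allows, and using $\sigma_i^2>0$ so the error process does not degenerate --- should show $e_i(t)=o(\sqrt t)$ a.s., whence $\hat Z_i$ coincides with the fluctuation of $D_i$ and inherits the bound $E[\hat Z_i^2]\le\sigma_i^2$.

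\textbf{Assembly and main obstacle.} Finally, the channel state together with the policy's internal state (tracking errors and auxiliary smoothing randomness) forms a positive-recurrent Markov chain, so the strong law for Markov chains gives the rates $\mu_i$ and the Markov CLT legitimizes the limit defining $\hat Z_i$ and the variance bound, placing $\{(\mu_i,\sigma_i^2)\}$ in the second-order capacity region. I expect the main difficulty to be the variance-shaping layer: simultaneously realizing all $N$ target variances when the increments are constrained to sum, slot by slot, to the channel's own correlated and non-i.i.d.\ fluctuation process, and --- crucially --- proving that the tracking policy transfers the target's variance faithfully, i.e.\ that $e_i(t)$ is genuinely $o(\sqrt t)$ rather than contributing its own $\Theta(\sqrt t)$ noise. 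The tight case of (\ref{eq:sufficient:variance}), where no independent-increment slack remains, is where this is most delicate.
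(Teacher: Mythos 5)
Your proposal is, at its core, the same argument the paper gives: assign each client the fixed share $c_i=\sqrt{\sigma_i^2}/\sum_j\sqrt{\sigma_j^2}$ of the aggregate fluctuation $\hat X_{\{1,\dots,N\}}$, run a deficit/target-tracking rule that always serves the ON client furthest behind its target, prove positive recurrence of the relative deficits by a multi-step Foster--Lyapunov drift that uses the strict slack in (\ref{eq:sufficient:mean}) (comparing against a frozen-priority policy over a window long enough for the channel chain to equilibrate), and conclude $E[\hat Z_i^2]=c_i^2\,v_{\{1,\dots,N\}}^2\leq\sigma_i^2$ from (\ref{eq:sufficient:variance}). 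Two of your ingredients are superfluous: the Hall/max-flow ``mean layer'' is not needed, because once the relative-deficit chain is positive recurrent the identity $\sum_i Z_i(t)=X_{\{1,\dots,N\}}(t)$ together with (\ref{eq:sufficient:total mean}) already forces $\sum_{t\leq T}Z_i(t)/T\to\mu_i$; and the independent smoothing increments are not needed, because the capacity region only demands $E[\hat Z_i^2]\leq\sigma_i^2$, so undershooting the target variance is harmless. For the same reason your worry about the tight case of (\ref{eq:sufficient:variance}) is misplaced: equality there simply makes $c_i^2 v_{\{1,\dots,N\}}^2=\sigma_i^2$ exactly, which still satisfies the required inequality, and no extra randomness has to be injected. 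The genuinely delicate step you correctly single out --- showing the tracking errors are $o(\sqrt{t})$ so that $\hat Z_i$ inherits exactly the share $c_i\hat X_{\{1,\dots,N\}}$ rather than adding its own $\Theta(\sqrt{t})$ noise --- is exactly where the paper's proof does its work, via the positive-recurrence lemma and Theorem~\ref{thm: approach solution}; note the paper ranks clients by $d_i(t)/\sqrt{\sigma_i^2}$ rather than by the raw gap $d_i(t)-\sqrt{\sigma_i^2}D(t)$ that your ``furthest below the target curve'' rule produces, a weighting chosen so that the policy exactly minimizes the one-step drift of the Lyapunov function $\frac12\sum_i\sqrt{\sigma_i^2}\bigl(d_i(t)/\sqrt{\sigma_i^2}-D(t)\bigr)^2$, so you would have to adapt the Lyapunov function to your variant.
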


Before proving Theorem~\ref{theorem:inner bound}, we first discuss its implications. Comparing the conditions in Theorems~\ref{theorem:outer bound} and \ref{theorem:inner bound}, we note that the only difference is that the sufficient condition requires strict inequality for (\ref{eq:necessary:mean}) for all proper subsets. Hence, the sufficient condition describes an inner bound that is almost tight except on some boundaries.

We prove Theorem~\ref{theorem:inner bound} by proposing a scheduling that achieves every point in the inner bound. Given $\{(\mu_i,\sigma_i^2)|1\leq i\leq N\}$, define the \emph{deficit} of a client $i$ at time $t$ as $d_i(t) = t\mu_i-\sum_{\tau=1}^tZ_i(\tau)$. In each time slot $t$, the AP chooses the client with the largest $d_i(t-1)/\sqrt{\sigma_i^2}$ among those with ON channels and transmits a packet to the chosen client. We call this scheduling policy the \emph{variance-weighted-deficit} (VWD) policy. 

We now analyze the performance of the VWD policy. Let $D(t):=\sum_{i=1}^Nd_i(t)/\sum_{i=1}^N\sqrt{\sigma_i^2}$. We then have 
\begin{align}
    &\Delta d_i(t):=d_i(t)-d_i(t-1)=\mu_i-Z_i(t),\\
    &\Delta D(t):=D(t)-D(t-1)\nonumber\\
    =&\frac{\sum_{i=1}^N\mu_i-\sum_{i=1}^NZ_i(t)}{\sum_{i=1}^N\sqrt{\sigma_i^2}}=\frac{m_{\{1,2,\dots,N\}}-X_{\{1,2,\dots,N\}}(t)}{\sum_{i=1}^N\sqrt{\sigma_i^2}}.
\end{align}

Consider the Lyapunov function $L(t):=\frac{1}{2}\sum_{i=1}^N\sqrt{\sigma_i^2}\Big(\frac{d_i(t)}{\sqrt{\sigma_i^2}}-D(t)\Big)^2$. Let $H^t$ be the system history up to time $t$. We can derive the expected one-step Lyapunov drift as
\begin{align}
    &\Delta (L(t)) := E[L(t) - L(t-1)|H^{t-1}]\notag \\ 
    =& E[\frac{1}{2} \sum_{i=1}^N \sqrt{\sigma_i^2}\Big(\frac{d_i(t)}{\sqrt{\sigma_i^2}}-D(t))\Big)^2\nonumber\\
    &-\frac{1}{2} \sum_{i=1}^N\sqrt{\sigma_i^2}\Big(\frac{d_i(t-1)}{\sqrt{\sigma_i^2}}-D(t-1)\Big)^2|H^{t-1}]\nonumber\\
    =&E[\sum_{i=1}^N\sqrt{\sigma_i^2}\Big(\frac{d_i(t-1)}{\sqrt{\sigma_i^2}}-D(t-1)\Big)\Big(\frac{\Delta d_i(t)}{\sqrt{\sigma_i^2}}-\Delta D(t)\Big)\nonumber\\
    &+\frac{1}{2} \sum_{i=1}^N \sqrt{\sigma_i^2}\Big(\frac{\Delta d_i(t)}{\sqrt{\sigma_i^2}}-\Delta D(t))\Big)^2|H^{t-1}]\nonumber\\
    \leq &B + E[\sum_{i=1}^N\Big(\frac{d_i(t-1)}{\sqrt{\sigma_i^2}}-D(t-1)\Big)\Delta d_i(t)\nonumber\\
    &-\sum_{i=1}^N\sqrt{\sigma_i^2}\Big(\frac{d_i(t-1)}{\sqrt{\sigma_i^2}}-D(t-1)\Big)\Delta D(t)|H^{t-1}]\nonumber\\
    =&B + E[\sum_{i=1}^N\Big(\frac{d_i(t-1)}{\sqrt{\sigma_i^2}}-D(t-1)\Big)\Delta d_i(t)|H^{t-1}], \label{eq:one-step-Lyapunov}
\end{align}
where $B$ is a bounded constant. The last two steps follow because $\Delta d_i(t)$ and $\Delta D(t)$ are bounded and because $\sum_{i=1}^Nd_i(t-1)=\sum_{i=1}^N\sqrt{\sigma_i^2}D(t-1)$.

The VWD policy schedules the client with the largest $d_i(t-1)/\sqrt{\sigma_i^2}$, which is also the client with the largest $d_i(t-1)/\sqrt{\sigma_i^2} - D(t-1)$, among those with ON channels. Hence, under the VWD policy, the system can be modeled as a Markov process whose state consists of the channel states and $d_i(t-1)/\sqrt{\sigma_i^2} - D(t-1)$ of all clients. Further, the VWD policy is the policy that minimizes $E[\sum_{i=1}^N\Big(\frac{d_i(t-1)}{\sqrt{\sigma_i^2}}-D(t-1)\Big)\Delta d_i(t)|H^{t-1}]$ for all $t$. We first show that the Markov process is positive-recurrent.

\begin{lemma}
Assume that (\ref{eq:sufficient:mean}) -- (\ref{eq:sufficient:non-negative}) are satisfied. Then, under the VWD policy, the system-wide Markov process, whose state consists of the channel states and $d_i(t-1)/\sqrt{\sigma_i^2} - D(t-1)$ of all clients, is positive-recurrent.
\end{lemma}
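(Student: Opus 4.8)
The plan is to verify a multi-step Foster--Lyapunov drift condition for the Markov chain whose state is the vector of channel states together with $e_i(t):=d_i(t)/\sqrt{\sigma_i^2}-D(t)$, $1\le i\le N$, using the Lyapunov function $L(t)=\tfrac12\sum_i\sqrt{\sigma_i^2}\,e_i(t)^2$ introduced above. Two observations make this legitimate: since VWD compares only the quantities $d_i(t-1)/\sqrt{\sigma_i^2}$, the scheduling decision is a function of the $e_i$'s and the channel states, so this tuple is indeed a Markov chain; and it evolves on a countable state space on which each sublevel set $\{L\le c\}$ is finite (here $\sigma_i^2>0$ is used). It therefore suffices to find a fixed $T$ and a finite set $F$ such that $E[L(t_0+T)-L(t_0)\mid H^{t_0}]\le-1$ for every state outside $F$, the $T$-step drift being finite on $F$.

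Starting from the one-step bound (\ref{eq:one-step-Lyapunov}) together with $\Delta d_i(t)=\mu_i-Z_i(t)$, and using that under VWD the served client is the ON client with the largest $e_i(t-1)$, one obtains
\begin{align*}
\Delta(L(t))\le{}&B+\sum_{i=1}^N e_i(t-1)\mu_i\\
&-E\big[\mathbf 1\{\mathcal O(t)\neq\emptyset\}\max_{i\in\mathcal O(t)}e_i(t-1)\mid H^{t-1}\big],
\end{align*}
where $\mathcal O(t)$ denotes the random set of ON clients at time $t$. Summing over a window of length $T$, conditioning on $H^{t_0}$, replacing $e_i(t_0+s-1)$ by $e_i(t_0)$ (an $O(T^2)$ error, as $|e_i(t)-e_i(t-1)|$ is bounded) and replacing the time-averaged channel occupancy $\frac1T\sum_s\Pr(\mathcal O(t_0+s)=A\mid H^{t_0})$ by the stationary probability $\pi_A$ of the ON-set $A$ (an $O(\|e(t_0)\|)$ error whose constant is independent of $T$, by geometric ergodicity of the finite channel chain, uniform in the starting state), I reach a bound of the form
\[
E[L(t_0+T)-L(t_0)\mid H^{t_0}]\le -T\,f\big(e(t_0)\big)+C_1T^2+C_2\|e(t_0)\|,
\]
where $f(e):=\sum_{A\neq\emptyset}\pi_A\max_{i\in A}e_i-\sum_{i=1}^N\mu_i e_i$ and $C_1,C_2$ are independent of $T$.

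The heart of the matter is the estimate $f(e)\ge\delta_0\|e\|$ for all $e$ in the hyperplane $\mathcal H:=\{e:\sum_i\sqrt{\sigma_i^2}e_i=0\}$ and some $\delta_0>0$; note $e(t)\in\mathcal H$ for all $t$. The map $S\mapsto m_S$ is a monotone submodular coverage function with $m_\emptyset=0$, and conditions (\ref{eq:sufficient:mean})--(\ref{eq:sufficient:non-negative}) place $(\mu_i)$ in the base polytope of the associated polymatroid, i.e.\ in the set of mean delivery-rate vectors realizable by stationary randomized policies. Since the greedy ``serve the largest-$e$ ON client'' rule maximizes $\sum_i r_ie_i$ over realizable $(r_i)$ and attains the value $\sum_{A\neq\emptyset}\pi_A\max_{i\in A}e_i$, we get $f(e)\ge0$. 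Because $f$ is convex, piecewise linear and positively homogeneous of degree one, the bound $f\ge\delta_0\|\cdot\|$ on $\mathcal H$ follows, by compactness of the unit sphere of $\mathcal H$, once $f(e)>0$ is shown for every $e\in\mathcal H\setminus\{0\}$. If $f(e)=0$ then $(\mu_i)$ is optimal for the weight $e$, so any stationary policy realizing it must serve a client of $T^\star:=\arg\max_i e_i$ whenever the ON-set meets $T^\star$; summing over ON-sets gives $\sum_{i\in T^\star}\mu_i=m_{T^\star}$. If $T^\star\subsetneq\{1,\dots,N\}$ this contradicts the \emph{strict} inequality (\ref{eq:sufficient:mean}), and if $T^\star=\{1,\dots,N\}$ then all $e_i$ are equal, forcing $e=0$ on $\mathcal H$. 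This appeal to the strict inequality --- the only difference between Theorems~\ref{theorem:outer bound} and \ref{theorem:inner bound} --- is, I expect, the delicate step, in particular the handling of ties in the $\arg\max$ and the identification of the realizable mean-rate region with the base polytope.

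Finally, fixing $T$ large enough that $T\delta_0-C_2\ge\delta_0$, the previous bound becomes $E[L(t_0+T)-L(t_0)\mid H^{t_0}]\le-\delta_0\|e(t_0)\|+C_1T^2$, which is at most $-1$ once $\|e(t_0)\|\ge(C_1T^2+1)/\delta_0$; outside this bounded (hence finite) set of $e$-values, and for all channel states, the $T$-step drift is $\le-1$, while on it the drift is clearly finite. The multi-step Foster--Lyapunov criterion then yields positive recurrence. (Condition (\ref{eq:sufficient:variance}) plays no role here; it is needed only later, to control the achieved temporal variance.)
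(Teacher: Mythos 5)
Your proposal is correct and follows the same skeleton as the paper's argument --- the same Lyapunov function, a $\mathbb{T}$-step (your $T$-step) drift bound built from (\ref{eq:one-step-Lyapunov}), the crucial use of the strict inequalities (\ref{eq:sufficient:mean}) together with the equality (\ref{eq:sufficient:total mean}), and a Foster--Lyapunov conclusion --- but you extract the negative drift by a different device. The paper compares VWD over the window against an auxiliary policy $\eta$ that freezes the priority order at time $\tau$, writes that policy's deliveries as telescoping differences $\sum_t X_{\{1,\dots,i\}}(t)-\sum_t X_{\{1,\dots,i-1\}}(t)$, and uses the channel-mixing bound (\ref{eq:T-steps-mean}) to get the explicit estimate (\ref{eq:eta drift}), i.e.\ drift at most $-\tfrac{\delta}{2}\max_i e_i$ with $\delta$ the minimal slack in (\ref{eq:sufficient:mean}). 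You instead evaluate the VWD service term directly as the stationary expectation $\sum_{A\neq\emptyset}\pi_A\max_{i\in A}e_i$, recognize it as the greedy (Edmonds) maximum of $\sum_i r_ie_i$ over the base polytope of the coverage function $S\mapsto m_S$, and obtain $f(e)\ge\delta_0\|e\|$ on the hyperplane $\mathcal{H}$ by positivity plus homogeneity and compactness. The two estimates are really the same inequality in different clothing: your greedy corner value is exactly the frozen-priority policy's rate vector, and an Abel summation of $f(e)$ along the sorted order gives $f(e)\ge\delta\,(e_{(1)}-e_{(N)})$ directly --- a quantitative constant, no compactness or submodularity needed --- which is precisely the paper's telescoping step. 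Doing it that way would also tighten the one genuinely soft spot you flag yourself: the equality-case/complementary-slackness argument ("any stationary policy realizing $\mu$ must serve $T^\star$ whenever possible") is stated loosely and silently uses the identification of the realizable rate region with the base polytope, which your proof does not otherwise need. Two further minor points, on par with the paper's own level of rigor rather than errors: the appeal to geometric ergodicity should be a Ces\`aro-type bound on partial sums of $P^s-\Pi$, since the channels are only assumed positive recurrent (possibly periodic), exactly as implicitly used in (\ref{eq:T-steps-mean}); and your claim that sublevel sets of $L$ contain finitely many reachable states is the same gloss the paper makes when invoking Foster--Lyapunov. You are also right that (\ref{eq:sufficient:variance}) plays no role in this lemma.
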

\begin{proof}
Due to (\ref{eq:sufficient:mean}), we can define 
\begin{equation}
    \delta := \min\{m_S-\sum_{i\in S}\mu_i|S\subsetneq\{1,2,\dots,N\}\}>0. \label{eq:delta}
\end{equation}
Further, since the channel of each client follows a positive-recurrent Markov process with finite states, there exists a finite number $\mathbb{T}$ such that
\begin{equation}
    \mathbb{T}m_S-\frac{\delta}{2}\leq E[\sum_{t=\tau+1}^{\tau+\mathbb{T}}X_S(t)|H^{\tau}]\leq \mathbb{T}m_S+\frac{\delta}{2}, \label{eq:T-steps-mean}
\end{equation}
for any $H^{\tau}$.

Let $L^V(t)$ and $\Delta d_i^V(t)$ be the values of $L(t)$ and $d_i(t)$ under the VWD policy. From (\ref{eq:one-step-Lyapunov}), we can bound the $\mathbb{T}$-step Lyapunov drift by
\begin{align}
    &E[L^V(\tau+\mathbb{T})-L^V(\tau)|H^\tau]\nonumber\\
    \leq &B\mathbb{T}+E[\sum_{t=\tau+1}^{\tau+\mathbb{T}}\sum_{i=1}^N\Big(\frac{d_i(t-1)}{\sqrt{\sigma_i^2}}-D(t-1)\Big)\Delta d_i^V(t)|H^{\tau}]\nonumber\\
    \leq &B\mathbb{T}+E[\sum_{t=\tau+1}^{\tau+\mathbb{T}}\sum_{i=1}^N\Big(\frac{d_i(t-1)}{\sqrt{\sigma_i^2}}-D(t-1)\Big)\Delta d_i^\eta(t)|H^{\tau}]\nonumber\\
    \leq &A+E[\sum_{i=1}^N\Big(\frac{d_i(\tau)}{\sqrt{\sigma_i^2}}-D(\tau)\Big)(\sum_{t=\tau+1}^{\tau+\mathbb{T}}\Delta d_i^\eta(t))|H^{\tau}], \label{eq:t steps drift}
\end{align}
for any other scheduling policy $\eta$, where $d_i^\eta(t)$ is the value of $d_i(t)$ under $\eta$ and $A$ is a bounded constant. The last inequality follows because $\mathbb{T}$, $|d_i(t)-d_i(\tau)|$, and $\Delta d_i(t)$ are all bounded for all $t\in [\tau+1, \tau+\mathbb{T}]$.

We now consider the scheduling policy $\eta$ that schedules the flow with the largest $d_i(\tau)/\sqrt{\sigma_i^2}$ among those with ON channels in all time slots $t\in [\tau+1, \tau+\mathbb{T}]$.

Without loss of generality, we assume that $d_1(\tau)/\sqrt{\sigma_1^2}\geq d_2(\tau)/\sqrt{\sigma_2^2}\geq\dots$. Under $\eta$, a client $i$ will be scheduled in time slot $t$ if it has an ON channel and all clients in $\{1,2,\dots, i-1\}$ have OFF channels, that is, $X_{\{1,2,\dots i\}}(t)=1$ and $X_{\{1,2,\dots i-1\}}(t)=0$. We hence have $\sum_{t=\tau+1}^{\tau+\mathbb{T}}Z_i(t)=\sum_{t=\tau+1}^{\tau+\mathbb{T}}X_{\{1,2,\dots i\}}(t)-\sum_{t=\tau+1}^{\tau+\mathbb{T}}X_{\{1,2,\dots i-1\}}(t)$. Therefore,
\begin{align}
    &E[\sum_{i=1}^N\Big(\frac{d_i(\tau)}{\sqrt{\sigma_i^2}}-D(\tau)\Big)(\sum_{t=\tau+1}^{\tau+\mathbb{T}}\Delta d_i^\eta(t))|H^{\tau}]\nonumber\\
    =&E[\sum_{i=1}^{N-1}\Big(\frac{d_i(\tau)}{\sqrt{\sigma_i^2}}-\frac{d_{i+1}(\tau)}{\sqrt{\sigma_{i+1}^2}}\Big)(\mathbb{T}\sum_{j=1}^i\mu_j\nonumber\\
    &-\sum_{t=\tau+1}^{\tau+\mathbb{T}}X_{\{1,2,\dots,i\}}(t))+\Big(\frac{d_N(\tau)}{\sqrt{\sigma_N^2}}-D(\tau)\Big)\nonumber\\
    &\times(\mathbb{T}\sum_{j=1}^N\mu_j-\sum_{t=\tau+1}^{\tau+\mathbb{T}}X_{\{1,2,\dots,N\}}(t))|H^{\tau}]\nonumber\\
    \leq& \sum_{i=1}^{N-1}\Big(\frac{d_i(\tau)}{\sqrt{\sigma_i^2}}-\frac{d_{i+1}(\tau)}{\sqrt{\sigma_{i+1}^2}}\Big)(-\delta/2)
    +\Big(\frac{d_N(\tau)}{\sqrt{\sigma_N^2}}-D(\tau)\Big)(-\delta/2)\nonumber\\
    =&\Big(\frac{d_1(\tau)}{\sqrt{\sigma_1^2}}-D(\tau)\Big)(-\delta/2), \label{eq:eta drift}
\end{align}
where the inequality holds due to (\ref{eq:sufficient:total mean}), (\ref{eq:delta}), and (\ref{eq:T-steps-mean}).

Combining (\ref{eq:t steps drift}) and (\ref{eq:eta drift}), and we have
\begin{equation}
    E[L^V(\tau+\mathbb{T})-L^V(\tau)|H^\tau]<-\delta,
\end{equation}
if $\max_i\Big(\frac{d_i(\tau)}{\sqrt{\sigma_i^2}}-D(\tau)\Big)>2(A/\delta+1)$, and
\begin{equation}
    E[L^V(\tau+\mathbb{T})-L^V(\tau)|H^\tau]\leq A,
\end{equation}
if $\max_i\Big(\frac{d_i(\tau)}{\sqrt{\sigma_i^2}}-D(\tau)\Big)\leq 2(A/\delta+1)$. Recall that $\sum_i \Big(\frac{d_i(\tau-1)}{\sqrt{\sigma_i^2}}-D(\tau-1)\Big)=0$ and the channel of each client follows a Markov process with finite states. Hence, all states of the system with $\max_i\Big(\frac{d_i(\tau)}{\sqrt{\sigma_i^2}}-D(\tau)\Big)\leq 2(A/\delta+1)$ belong to a finite set of states. By the Foster-Lyapunov Theorem, the system-wide Markov process is positive-recurrent.
\end{proof}

We now show that the VWD policy delivers all desirable second-order delivery models that satisfy the sufficient conditions (\ref{eq:sufficient:mean}) -- (\ref{eq:sufficient:non-negative}), and thereby establishing Theorem~\ref{theorem:inner bound}.

\begin{theorem}\label{thm: approach solution}
Assume that (\ref{eq:sufficient:mean}) -- (\ref{eq:sufficient:non-negative}) are satisfied. Then, under the VWD policy, $\lim_{T\rightarrow\infty}\frac{\sum_{t=1}^TZ_i(t)}{T}=\mu_i$ and $E[(\lim_{T\rightarrow\infty}\frac{\sum_{t=1}^TZ_i(t)-T\mu_i}{\sqrt{T}})^2]\leq\sigma_i^2,\forall i$.
\end{theorem}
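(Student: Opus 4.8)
\textit{Proof proposal.} The plan is to work directly with the deficit processes. Since $\sum_{\tau=1}^T Z_i(\tau)-T\mu_i=-d_i(T)$, both assertions are statements about the growth of $d_i(T)$: the first asks that $d_i(T)/T\to 0$, the second that the asymptotic variance of $d_i(T)/\sqrt T$ is at most $\sigma_i^2$. The key is the decomposition $d_i(T)=\sqrt{\sigma_i^2}\big(M_i(T)+D(T)\big)$, where $M_i(T):=d_i(T)/\sqrt{\sigma_i^2}-D(T)$ and $D(T)$ is as defined above. Because the AP always transmits whenever some channel is ON and because of (\ref{eq:sufficient:total mean}), we have $\sum_{i=1}^N Z_i(\tau)=X_{\{1,\dots,N\}}(\tau)$, hence $D(T)=\big(Tm_{\{1,\dots,N\}}-\sum_{\tau=1}^T X_{\{1,\dots,N\}}(\tau)\big)/\sum_{j=1}^N\sqrt{\sigma_j^2}$, so the $D(T)$ term is governed entirely by the channel Markov chain, while the $M_i(T)$ term is the ``balanced'' part of the deficit vector, satisfying $\sum_i\sqrt{\sigma_i^2}M_i(t)=0$.

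First I would establish $\sup_T E[L(T)]<\infty$. The $\mathbb T$-step drift estimates (\ref{eq:t steps drift})--(\ref{eq:eta drift}) proved in the preceding lemma, together with the ensuing case split, show that $E[L^V(\tau+\mathbb T)-L^V(\tau)\mid H^\tau]<-\delta$ whenever the state lies outside a finite set and is at most $A$ everywhere; combined with the fact that the one-step increments of $\sqrt{L(t)}$ are bounded, standard drift-analysis arguments (Foster--Lyapunov / Hajek-type occupation-time bounds) give $\sup_T E[L(T)]<\infty$. Since $L(T)\geq\tfrac12\sqrt{\sigma_i^2}\,M_i(T)^2$, this yields $c_i:=\sup_T E[M_i(T)^2]<\infty$ for every $i$. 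The first claim then follows: $D(T)/T\to 0$ almost surely by the strong law of large numbers for the channel Markov chain, and $M_i(T)/T\to 0$ almost surely because $\sup_T E[M_i(T)^2]<\infty$ gives (via Markov's inequality and Borel--Cantelli along $T=k^2$) that $M_i(k^2)/k^2\to 0$, while the bounded increments $|M_i(t)-M_i(t-1)|$ control $M_i(T)$ for $k^2\leq T<(k+1)^2$. Hence $d_i(T)/T\to 0$ a.s., i.e.\ $\sum_{\tau=1}^T Z_i(\tau)/T\to\mu_i$.

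For the second claim I would compute the asymptotic variance by expanding $E[(d_i(T)/\sqrt T)^2]=\sigma_i^2\big(E[M_i(T)^2]/T+2E[M_i(T)D(T)]/T+E[D(T)^2]/T\big)$. The first term is $O(1/T)$ by the bound on $E[M_i(T)^2]$; the cross term is $O(1/\sqrt T)$ by Cauchy--Schwarz together with $E[D(T)^2]=O(T)$; and $E[D(T)^2]/T\to v_{\{1,\dots,N\}}^2/(\sum_{j}\sqrt{\sigma_j^2})^2$ by the Markov central limit theorem applied to $X_{\{1,\dots,N\}}$. Therefore $\lim_{T\to\infty}E[(d_i(T)/\sqrt T)^2]=\sigma_i^2\,v_{\{1,\dots,N\}}^2/(\sum_{j}\sqrt{\sigma_j^2})^2\leq\sigma_i^2$, the last inequality being exactly (\ref{eq:sufficient:variance}) after squaring. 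Since $-d_i(T)/\sqrt T=(\sum_{\tau=1}^T Z_i(\tau)-T\mu_i)/\sqrt T$ converges in distribution (the $M_i$ contribution vanishes in probability and the $D$ contribution is asymptotically Gaussian by the Markov CLT), the limiting random variable $\hat Z_i$ has $E[\hat Z_i^2]\leq\liminf_{T\to\infty}E[(d_i(T)/\sqrt T)^2]\leq\sigma_i^2$ by lower semicontinuity of the second moment under weak convergence, which is the asserted bound.

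The step I expect to be the main obstacle is making $\sup_T E[L(T)]<\infty$ fully rigorous: the Lyapunov function is quadratic in the $M_i$, so its one-step increments are only $O(\sqrt{L})$ rather than uniformly bounded, and the negative drift is guaranteed only over $\mathbb T$ steps and only outside a finite subset of a countably infinite state space, so one must invoke a suitably robust version of drift analysis (and verify the bounded-increment condition on $\sqrt{L}$ rather than on $L$). A secondary subtlety is the passage from convergence in distribution of $d_i(T)/\sqrt T$ to the inequality on second moments, which is handled by the uniform bound on $E[(d_i(T)/\sqrt T)^2]$ and lower semicontinuity (no uniform integrability of higher moments is needed for the one-sided inequality we want).
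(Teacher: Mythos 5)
Your proposal follows essentially the same route as the paper's proof: the same decomposition $d_i(T)=\sqrt{\sigma_i^2}\,\bigl(M_i(T)+D(T)\bigr)$ with $M_i(T)=d_i(T)/\sqrt{\sigma_i^2}-D(T)$, the balanced component suppressed via the positive-recurrence/drift estimates of the preceding lemma, and the common component $D(T)$ identified with the channel process $X_{\{1,\dots,N\}}$ and handled by the Markov CLT together with condition (\ref{eq:sufficient:variance}). The only difference is that you fill in technical details (Hajek-type moment bounds, Borel--Cantelli for the strong law, lower semicontinuity of the second moment under weak convergence) that the paper compresses into the direct assertions (\ref{f:positive recurrent for mean})--(\ref{f:positive recurrent for var}), so your write-up is a more careful rendering of the same argument rather than a different one.
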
 
\begin{proof}

Since the system-wide Markov process is positive recurrent under the VWD policy, we have: \begin{align}
    \lim_{T\to\infty} \frac{d_i(T)/\sqrt{\sigma_i^2}-D(T)}{T} \to 0, \forall i, \label{f:positive recurrent for mean} \\
    \lim_{T\to\infty} \frac{d_i(T)/\sqrt{\sigma_i^2}-D(T)}{\sqrt{T}} \to 0, \forall i. \label{f:positive recurrent for var}
\end{align}

First, we show that $\lim_{T\rightarrow\infty}\frac{\sum_{t=1}^TZ_i(t)}{T}=\mu_i,\forall i$.
Recall that $d_i(t) = t\mu_i - \sum_{\tau=1}^tZ_i(\tau)$ and $D(t) = {\sum_{i=1}^N d_i(t)}/{\sum_{i=1}^N \sqrt{\sigma_i^2}}$.
By (\ref{eq:sufficient:total mean}), we have:
\begin{align}
     &\lim_{T \to \infty} \frac{D(T)}{T} = \lim_{T \to \infty} \frac{\sum_{i=1}^N T\mu_i -\sum_{t=1}^T\sum_{i=1}^N  Z_i(t)}{T\sum_{i=1}^N \sqrt{\sigma_i^2}} \notag \\
     =&\lim_{T\to \infty} \frac{Tm_{\{1,2,\dots,N\}} - \sum_{t=1}^TX_{\{1,2,\dots,N\}}}{{T\sum_{i=1}^N \sqrt{\sigma_i^2}}} = 0. \label{f:proving mean converge for D}
\end{align}
Hence, by (\ref{f:positive recurrent for mean}), we have $\lim_{T\to\infty} \frac{d_i(T)}{T}=\mu_i-\lim_{T\to\infty} \frac{\sum_{t=1}^TZ_i(t)}{T}=0$, for all $i$.

Next, we show that $E[(\lim_{T\rightarrow\infty}\frac{\sum_{t=1}^TZ_i(t)-T\mu_i}{\sqrt{T}})^2]\leq\sigma_i^2,\forall i$. We have, by (\ref{eq:sufficient:variance}),
\begin{align}
    E[(\lim_{T\rightarrow\infty}\frac{D(T)}{\sqrt{T}})^2]=\frac{v^2_{\{1,2,\dots,N\}}}{(\sum_{i=1}^N\sqrt{\sigma_i^2})^2}\leq 1,
\end{align}

and, hence,
\begin{align}
    &E[(\lim_{T\rightarrow\infty}\frac{\sum_{t=1}^TZ_i(t)-T\mu_i}{\sqrt{T}})^2] = E[(\lim_{T\rightarrow\infty}\frac{d_i(T)}{\sqrt{T}})^2]\notag\\
    =&\sigma_i^2E[(\lim_{T\rightarrow\infty}\frac{D(T)}{\sqrt{T}})^2]\leq \sigma_i^2.
\end{align}
\end{proof}

\begin{figure*}[hbt!]
\begin{center} 
\subfigure[N = 5 Clients.]{\includegraphics[width=0.3\textwidth, height=1.8in]{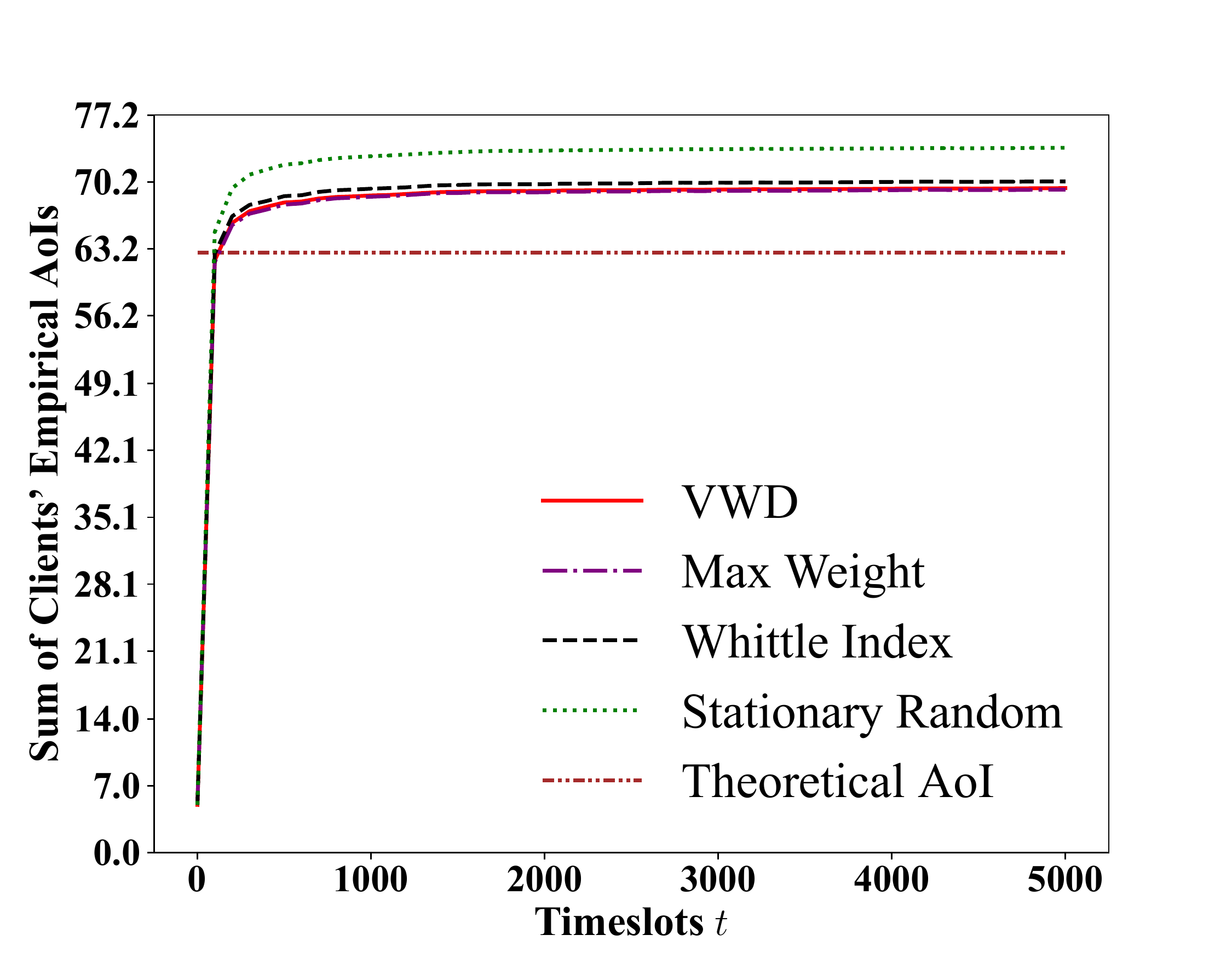}}
\subfigure[N = 10 Clients.]{\includegraphics[width=0.3\textwidth, height=1.8in]{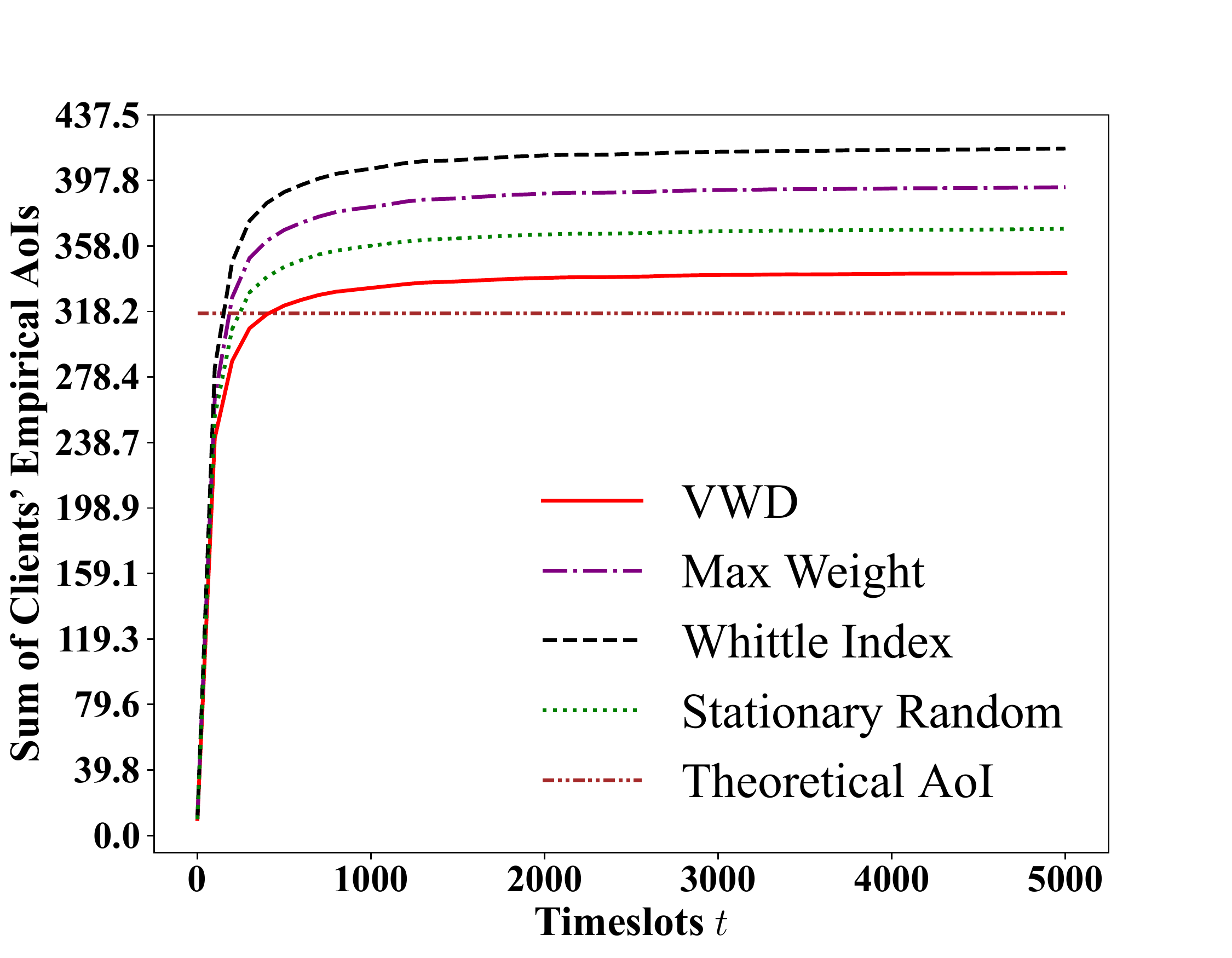}}
\subfigure[N = 20 Clients.]{\includegraphics[width=0.3\textwidth, height=1.8in]{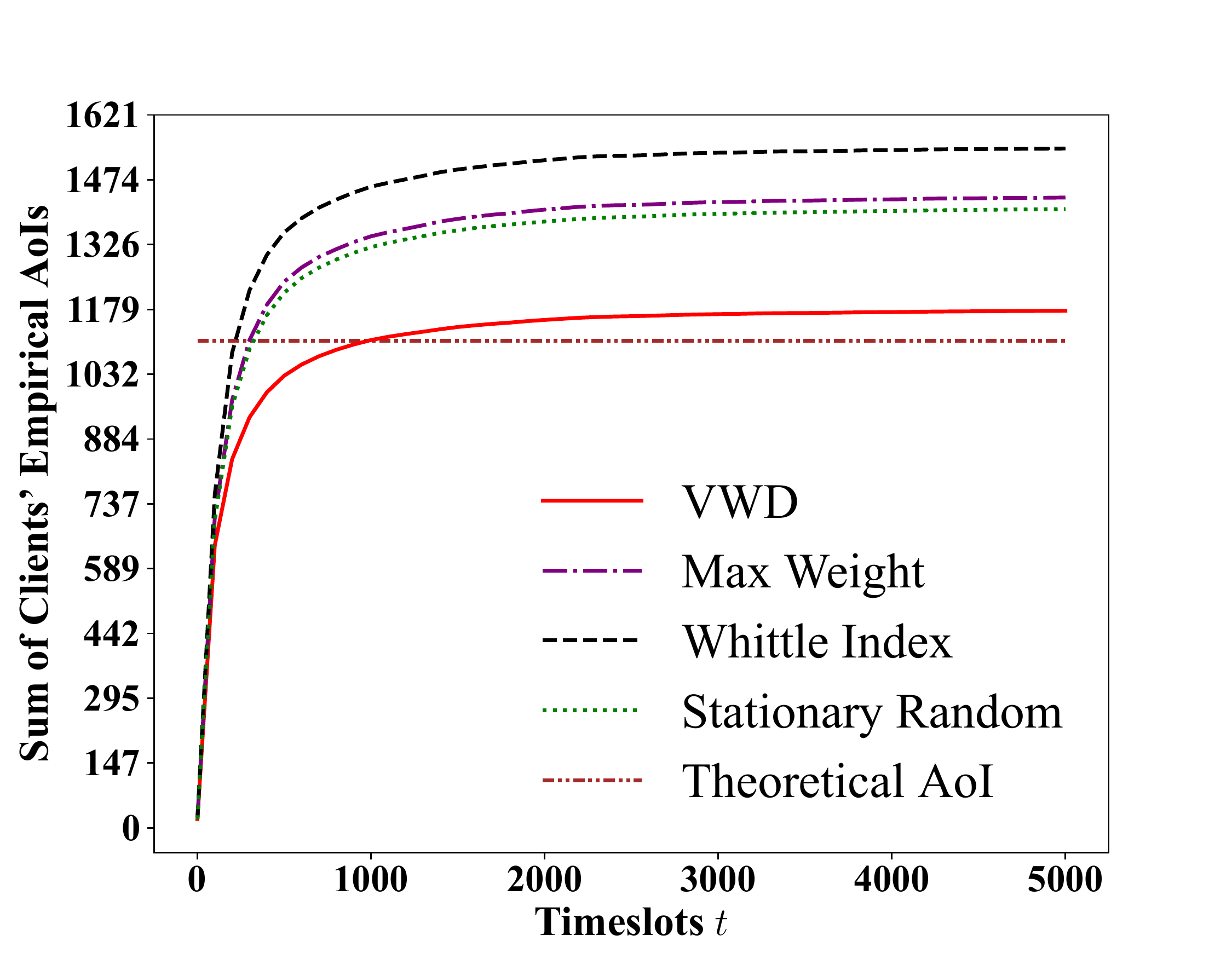}}
\end{center}
\caption{Total Uniformly Weighed Empirical Age of Information (AoI) Averaged Over $1000$ Runs.}
\label{fig:empirical_aoi_sum}
\end{figure*}

We conclude this section by discussing how to leverage Theorems~\ref{theorem:inner bound} and \ref{thm: approach solution} to solve the second-order network optimization problem. Recall that the performance of a client $i$ is modeled by $F_i(\mu_i, \sigma_i^2)$. For example, when the goal is to minimize total AoI, we can define $F_i(\mu_i, \sigma_i^2) = -\frac{1}{2}(\frac{\sigma_i^2}{\mu_i^2}+\frac{1}{\mu_i})-\frac{1}{\lambda_i}+\frac{1}{2}$. Hence, the second-order optimization problem can be written as the following:

\begin{align} \label{eq:objective_function}
    \max & \sum_{i=1}^NF_i(\mu_i,\sigma_i^2)\\
    \mbox{s.t. } & \mbox{(\ref{eq:sufficient:mean}) -- (\ref{eq:sufficient:non-negative})}.
\end{align}

The condition (\ref{eq:sufficient:mean}) involves strict inequalities, which cannot be used by standard optimization solvers. We change (\ref{eq:sufficient:mean}) to $\sum_{i\in S}\mu_i\leq m_S-\delta$, where $\delta$ is a small positive number. After the change, the optimization problem can be directly solved by standard solvers to find the optimal $\{\mu_i, \sigma_i^2|1\leq i\leq N\}$. After finding the optimal $\{\mu_i, \sigma_i^2|1\leq i\leq N\}$, one can use the VWD policy to attain the optimal network performance.

\section{Simulation Results} \label{sec:simulation}

In this section, we present the simulation results for the proposed scheduler VWD. The objective is to minimize the total weighted AoI, $\sum_i \alpha_i \overline{AoI}_i$, where $\alpha_i$ is the weight of client $i$. The system model is the one discussed in Section~\ref{sec: aoi}. Each client has a Gilbert-Elliott channel with transition probabilities $p_i$ and $q_i$. In each time slot, each client $i$ generates a new packet with probability $\lambda_i$. VWD is evaluated against three recent scheduling policies on this problem. We provide a description of each policy, along with modifications needed to fit the testing setting.

\begin{itemize}
    \item \textbf{Whittle index policy}: This policy is based on the Whittle index policy in  \cite{Hsu18}. Under our setting, the policy calculates an index for ON clients based on their AoIs as $W_i(t) = \frac{AoI_i^2(t)}{2} - \frac{AoI_i(t)}{2} + \frac{AoI_i(t)}{q_i/(p_i+q_i)}$, and then schedules the ON client with the largest index. \cite{Hsu18} has shown that $W_i(t)$ is indeed the Whittle index of a client when the channel is i.i.d., i.e., $p_i+q_i=1$, and $\lambda_i=1$.
    
    \item \textbf{Stationary randomized policy}: This policy calculates a weight $\mu_i$ for each client. In each time slot, it randomly picks an ON client, with the probability of picking $i$ being proportional to $\mu_i$. In the setting of \cite{kadota2019minimizing}, it has been shown that, when $\mu_i$ is properly chosen, this policy achieves an approximation ratio of four in terms of total weighted AoI. In our setting, we choose $\mu_i$ to be the optimal $\mu_i$ from solving (\ref{eq:objective_function}). 
    
    \item \textbf{Max weight policy} \cite{kadota2019minimizing}: This policy schedules the ON client with the largest $(AoI_i(t) - z_i(t))/\mu_i$. In the setting of \cite{kadota2019minimizing}, $z_i(t)$ is the time since client $i$ generates the latest packet. It has been shown that the total weighted AoI under this policy is no larger than that under the stationary randomized policy, and therefore this policy also achieves an approximation ratio of four. In our setting, the AP does not know when each client generates a new packet. Hence, we choose $z_i(t)$ to be $\frac{1}{\lambda_i}$, which is the expected time since client $i$ generates the latest packet.
\end{itemize}

\begin{figure*}[hbt!]
\begin{center} 
\subfigure[N = 5 Clients.]{\includegraphics[width=0.3\textwidth, height=1.8in]{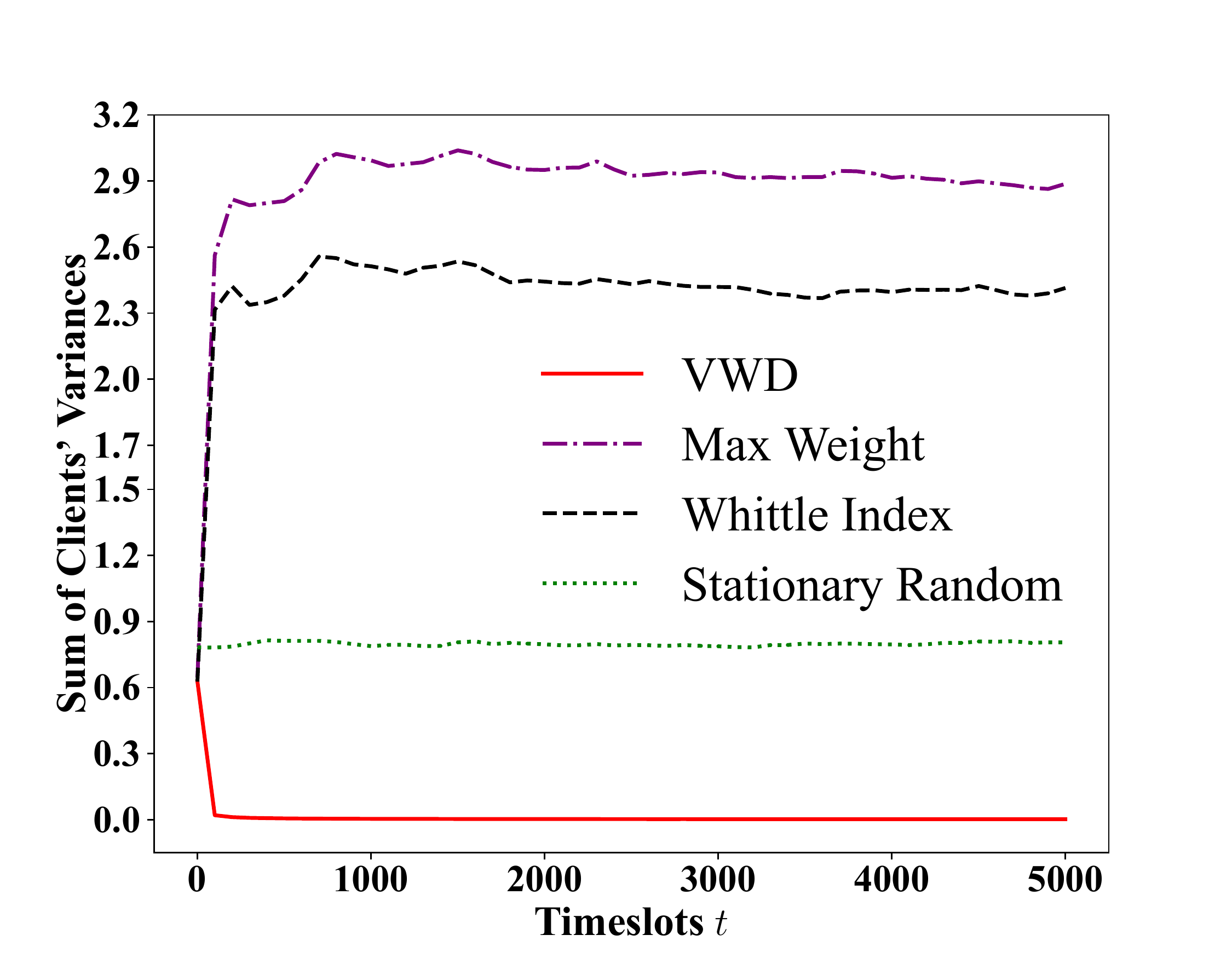}}
\subfigure[N = 10 Clients.]{\includegraphics[width=0.3\textwidth, height=1.8in]{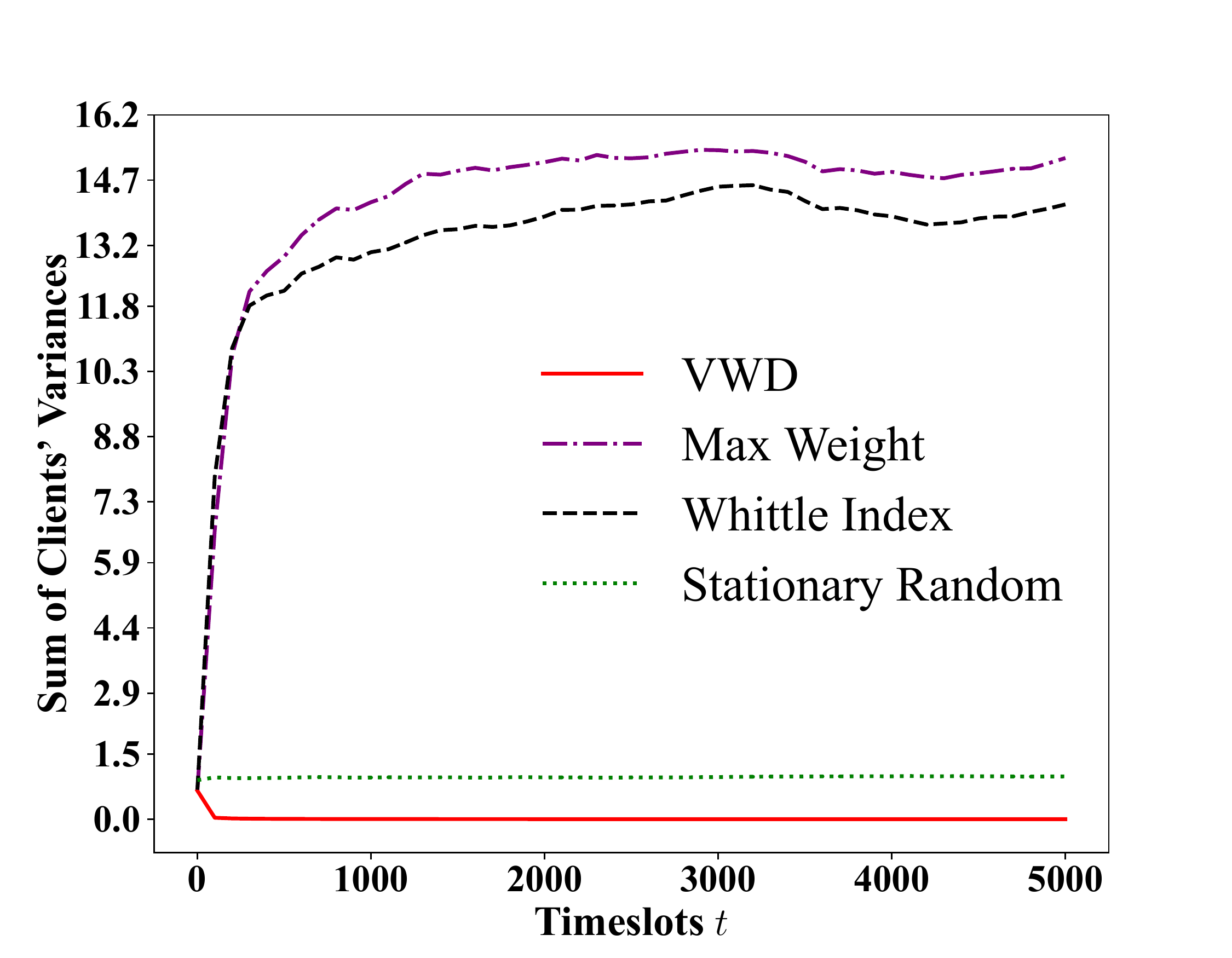}}
\subfigure[N = 20 Clients.]{\includegraphics[width=0.3\textwidth, height=1.8in]{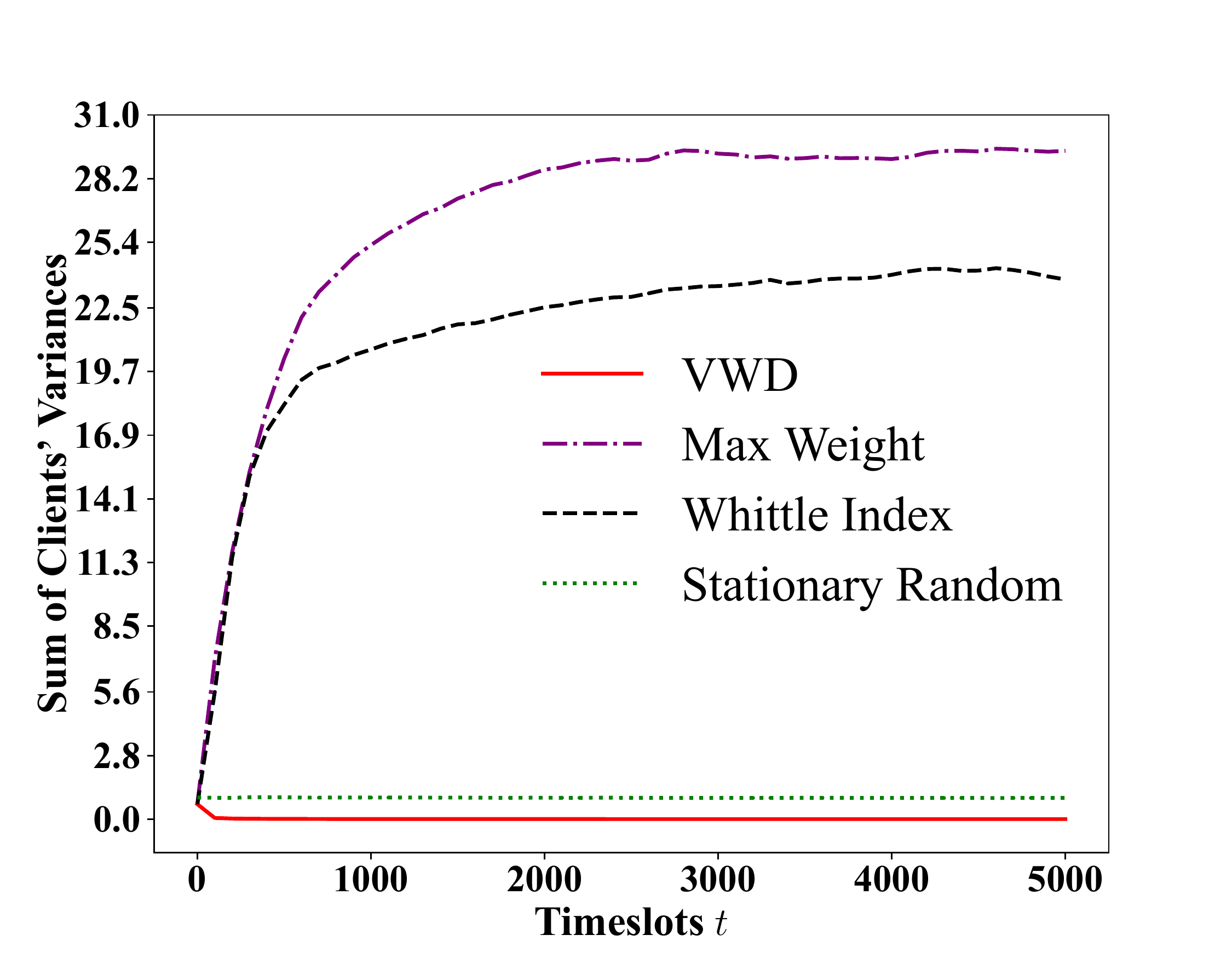}}
\end{center}
\caption{Empirical Variance of All Clients Averaged Over $1000$ Runs.}
\label{fig:variance_policies}
\end{figure*}

\begin{figure*}[hbt!]
\begin{center} 
\subfigure[N = 5 Clients.]{\includegraphics[width=0.3\textwidth, height=1.8in]{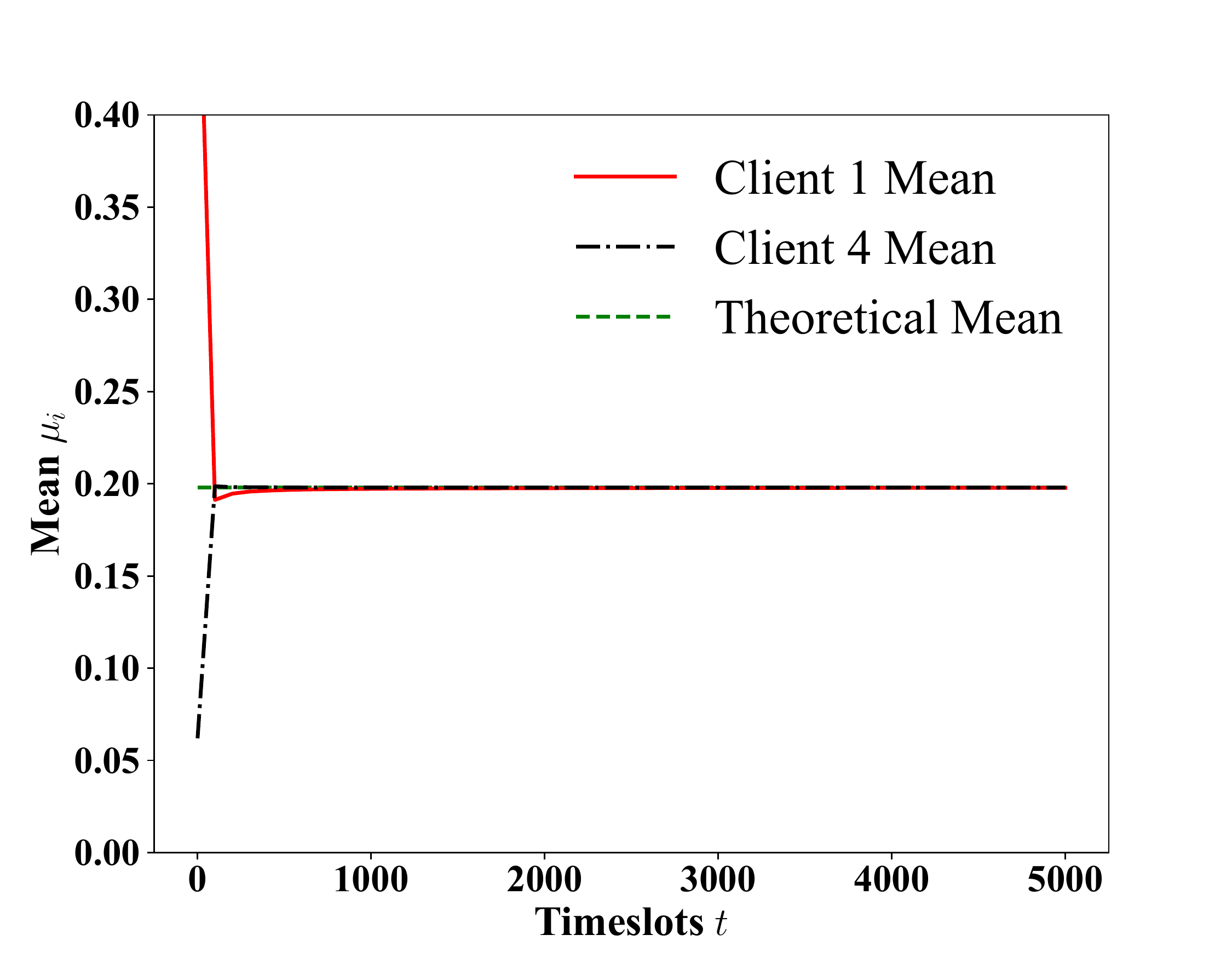}}
\subfigure[N = 10 Clients.]{\includegraphics[width=0.3\textwidth, height=1.8in]{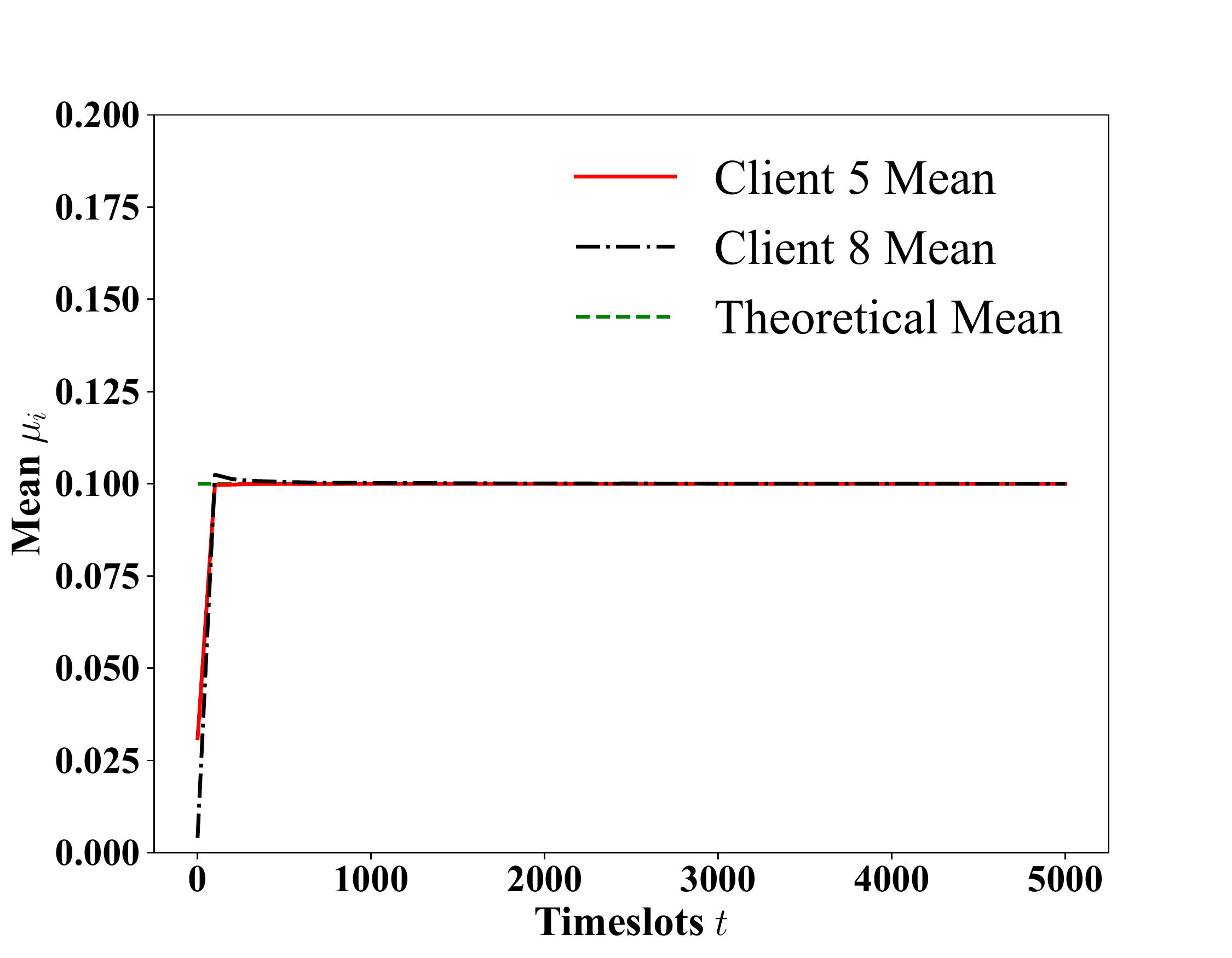}}
\subfigure[N = 20 Clients.]{\includegraphics[width=0.3\textwidth, height=1.8in]{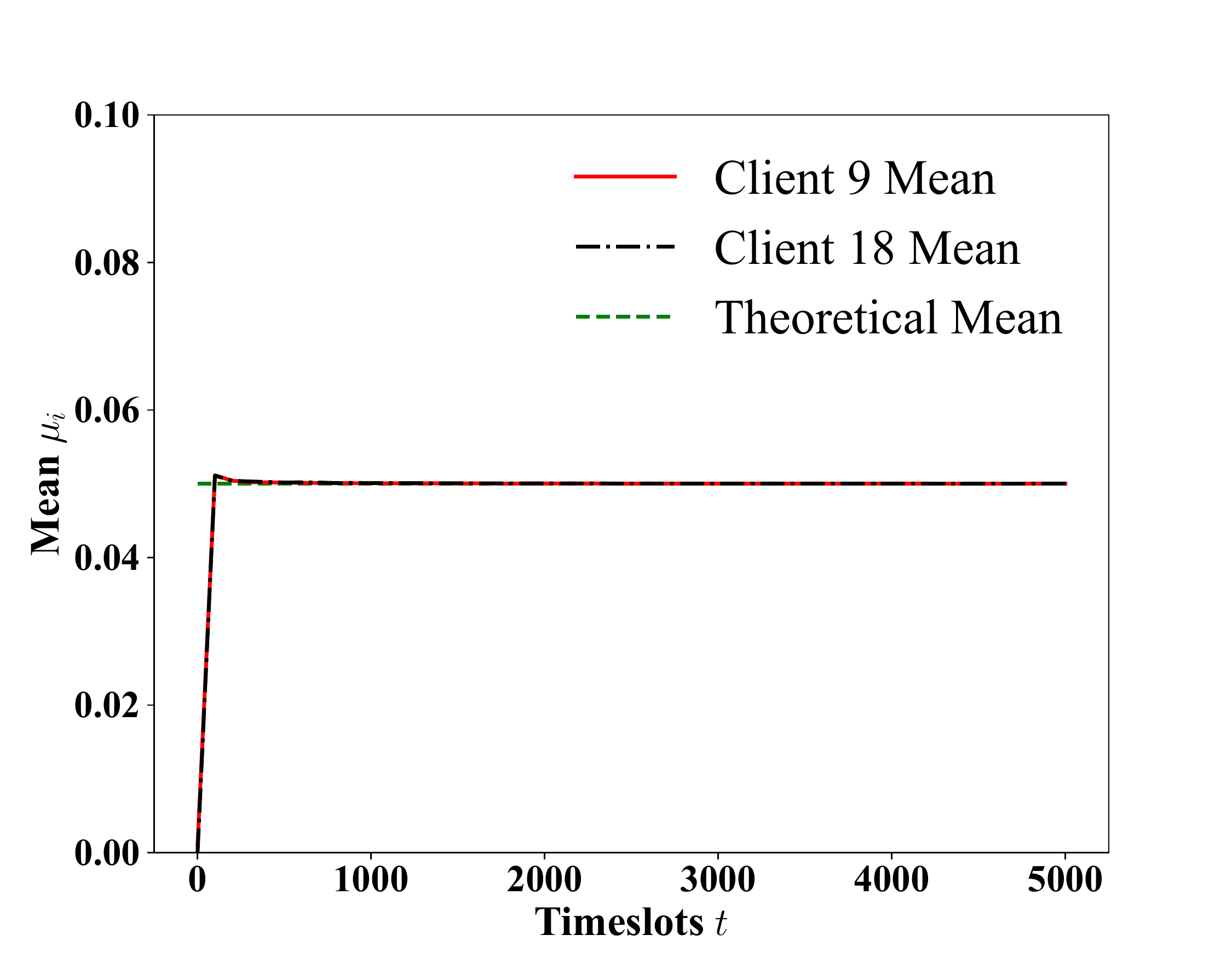}}
\end{center}
\caption{Mean Convergence of Two Randomly Selected Clients.}
\label{fig:mean_convergence}
\end{figure*}

\begin{figure*}[hbt!]
\begin{center} 
\subfigure[N = 5 Clients.]{\includegraphics[width=0.3\textwidth, height=1.8in]{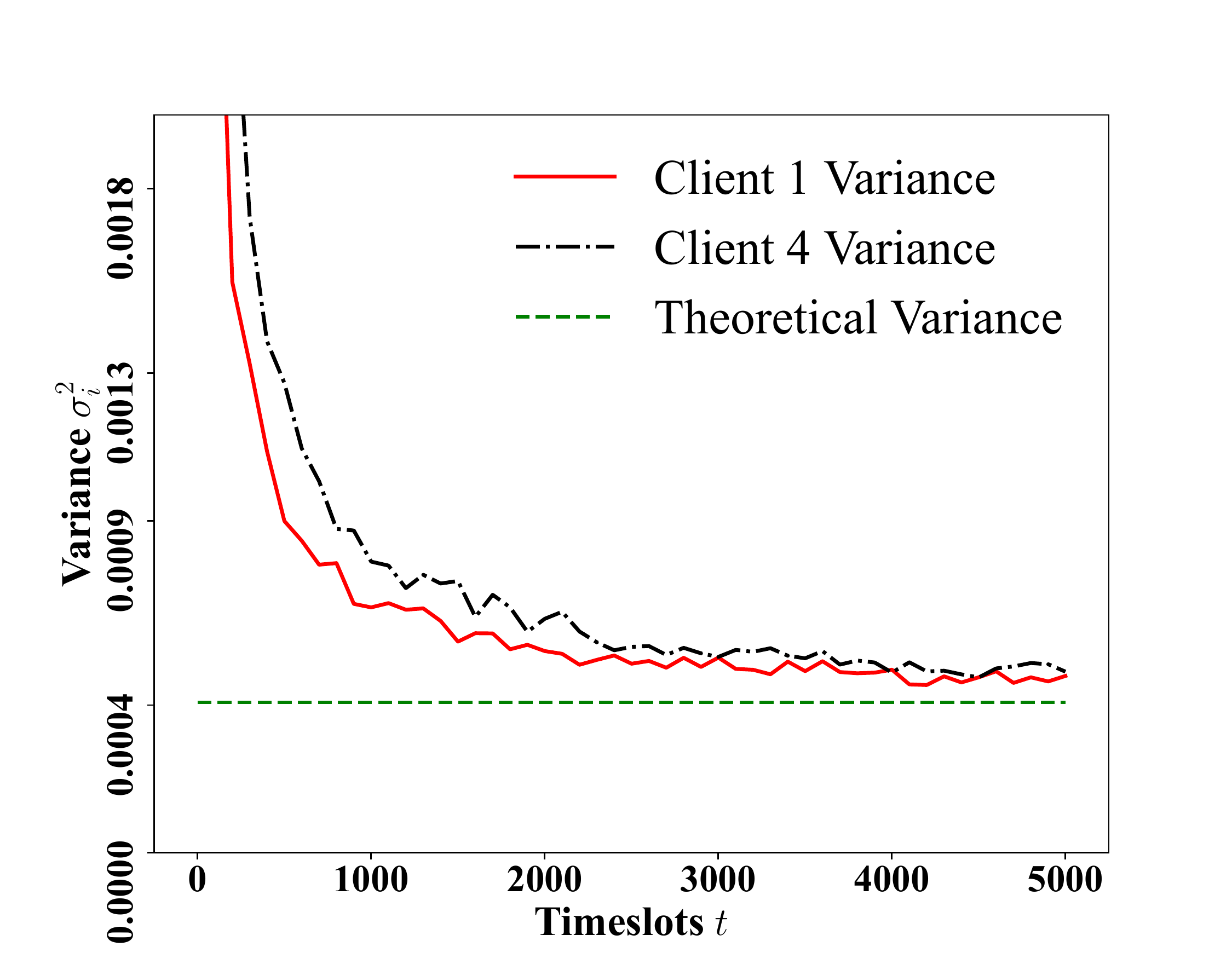}}
\subfigure[N = 10 Clients.]{\includegraphics[width=0.3\textwidth, height=1.8in]{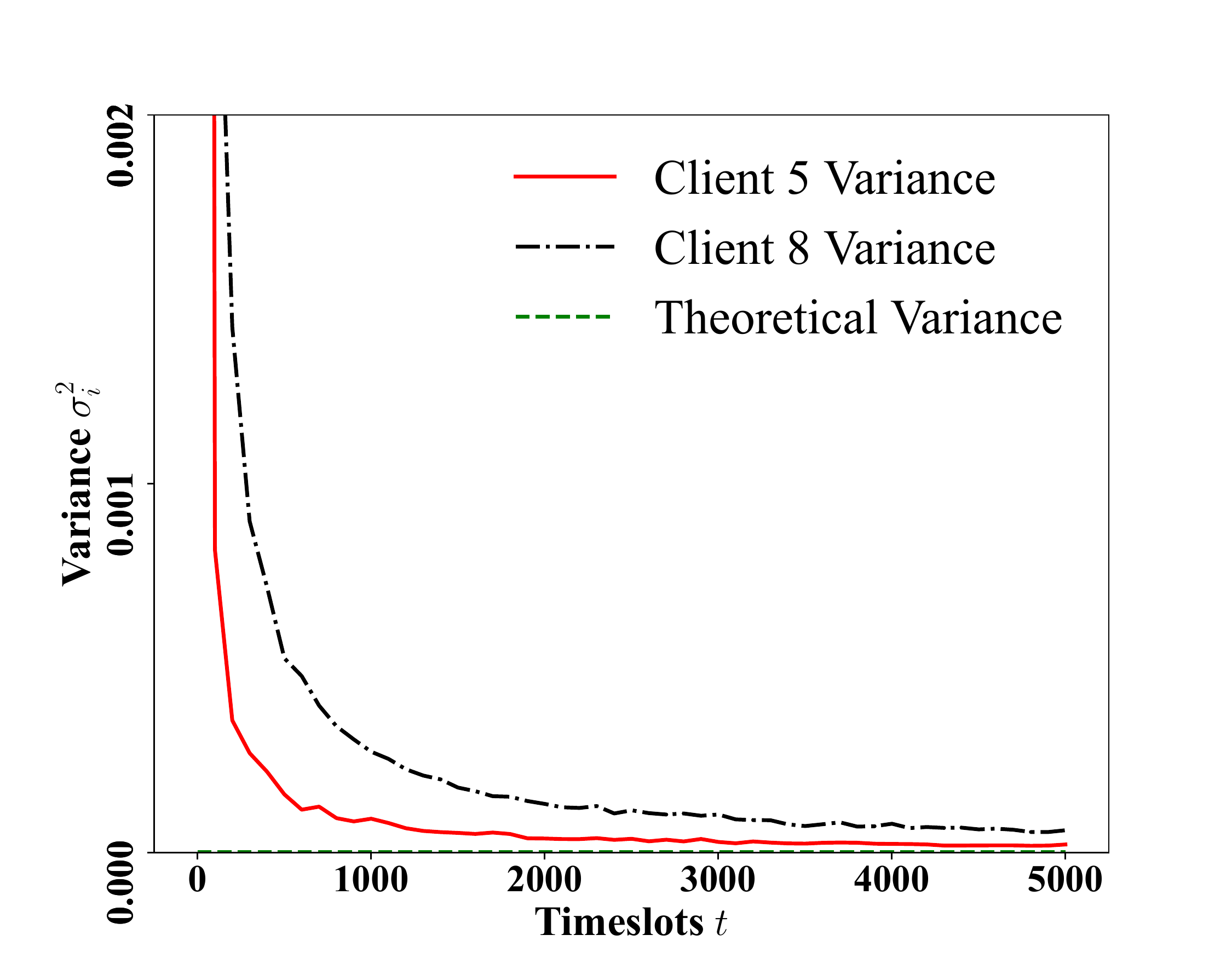}}
\subfigure[N = 20 Clients.]{\includegraphics[width=0.3\textwidth, height=1.8in]{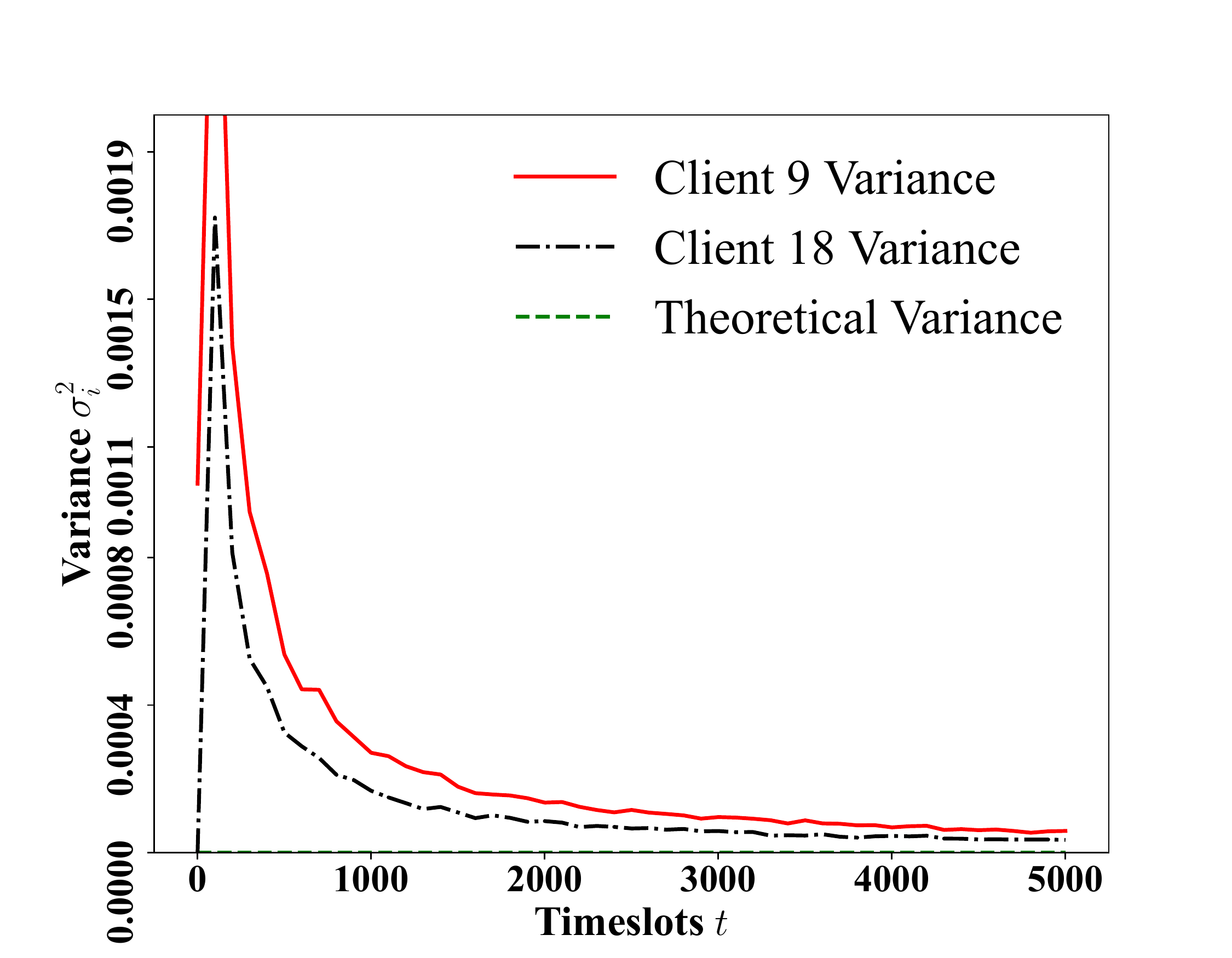}}
\end{center}
\caption{Variance Convergence of Two Randomly Selected Clients.}
\label{fig:var_convergence}
\end{figure*}

We consider three different systems, each with 5 clients, 10 clients, and 20 clients, respectively. For each system, $p_i$ and $q_i$ are randomly chosen from the range $(0.05, 0.95)$, and $\{\lambda_i\}$ is randomly chosen from $(\frac{0.1}{N}, \frac{1}{N})$. After determining the values of $p_i$, $q_i$ and $\lambda_i$, we generate $1000$ independent traces of channels and packet arrivals. The performance of each policy is the average over these 1000 independent traces. We consider both the unweighted case, i.e., $\alpha_i\equiv 1, \forall i$, and the weighted case. In addition to the evaluated policies, we also include the numerical solutions from solving the problem (\ref{eq:objective_function}), which is referred to as the Theoretical AoI.


\subsection{Empirical AoI Performance With Equal AoI Weights}

Fig.~\ref{fig:empirical_aoi_sum} shows the average total AoI for different network sizes $N = \{5, 10, 20\}$ when $\alpha_i\equiv 1$. It can be observed that VWD achieves the smallest total AoI in all systems, with max weight performing virtually the same as VWD when $N=5$. VWD's superiority becomes more significant as $N$ increases. It can also be observed that the empirical AoI under VWD is very close to the theoretical AoI based on the solution to (\ref{eq:objective_function}), and the difference decreases as $N$ increases. The differences between the empirical AoI under VWD and the theoretical one are $10.7\%$, $7.8\%$, and $6.1\%$ for $N=5, 10, 20$, respectively.

To understand why VWD performs much better than the other three policies, we evaluate the total empirical variance under each policy. Specifically, let $d_i(t)$ be the total number of packet deliveries for client $i$ from time 1 to time $t$. The empirical variance of a client $i$ at time $t$ is defined as the variance of $\frac{d_i(t)}{\sqrt{t}}$ across all 1000 independent runs. The total empirical variance is then the sum of the empirical variances of all clients. Fig.~\ref{fig:variance_policies} shows that VWD has much smaller variances than the other three policies. The ability to properly control variance enables VWD to achieve small AoIs.

\begin{figure*}[hbt!]
\begin{center} 
\subfigure[N = 5 Clients.]{\includegraphics[width=0.3\textwidth, height=1.8in]{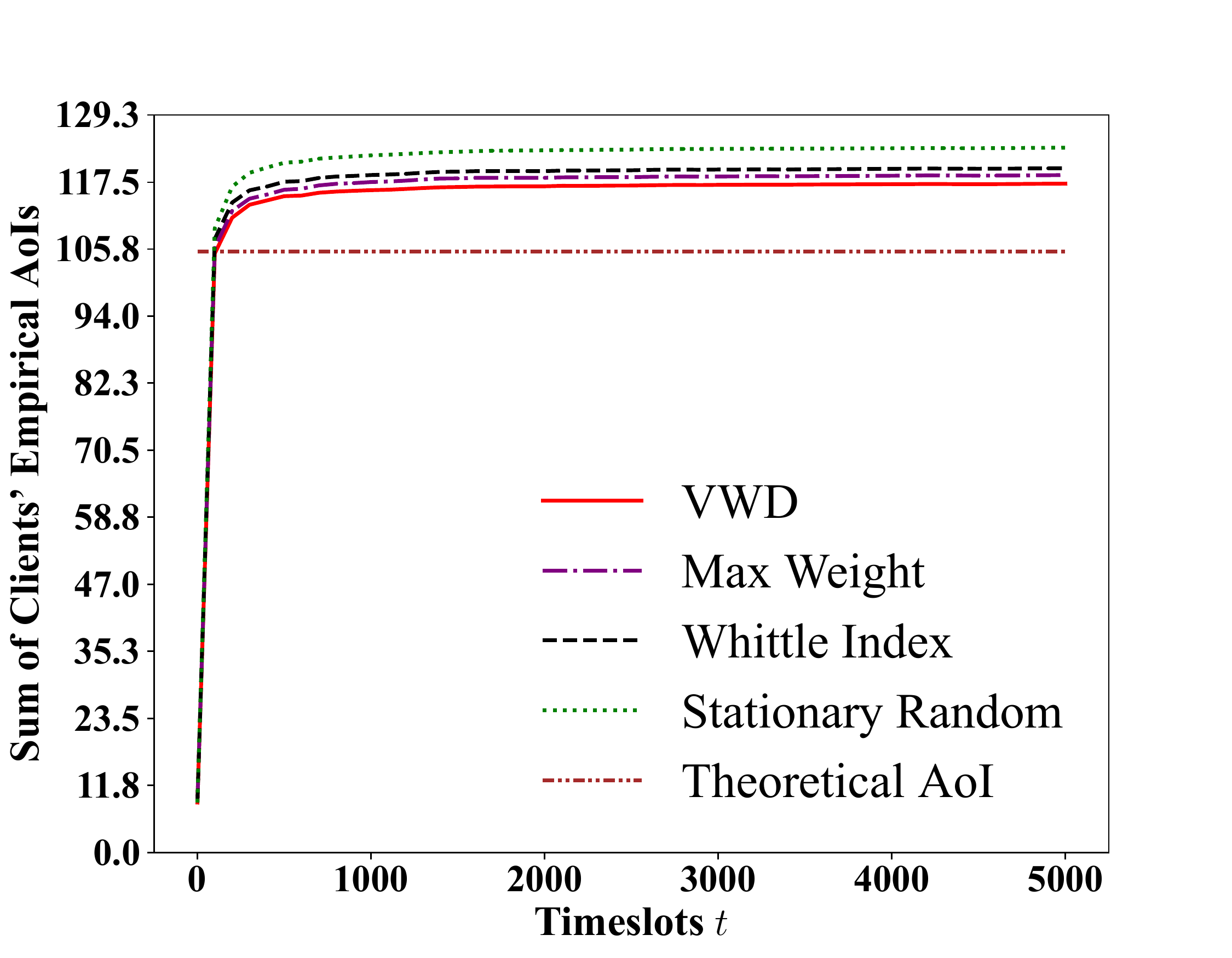}}
\subfigure[N = 10 Clients.]{\includegraphics[width=0.3\textwidth, height=1.8in]{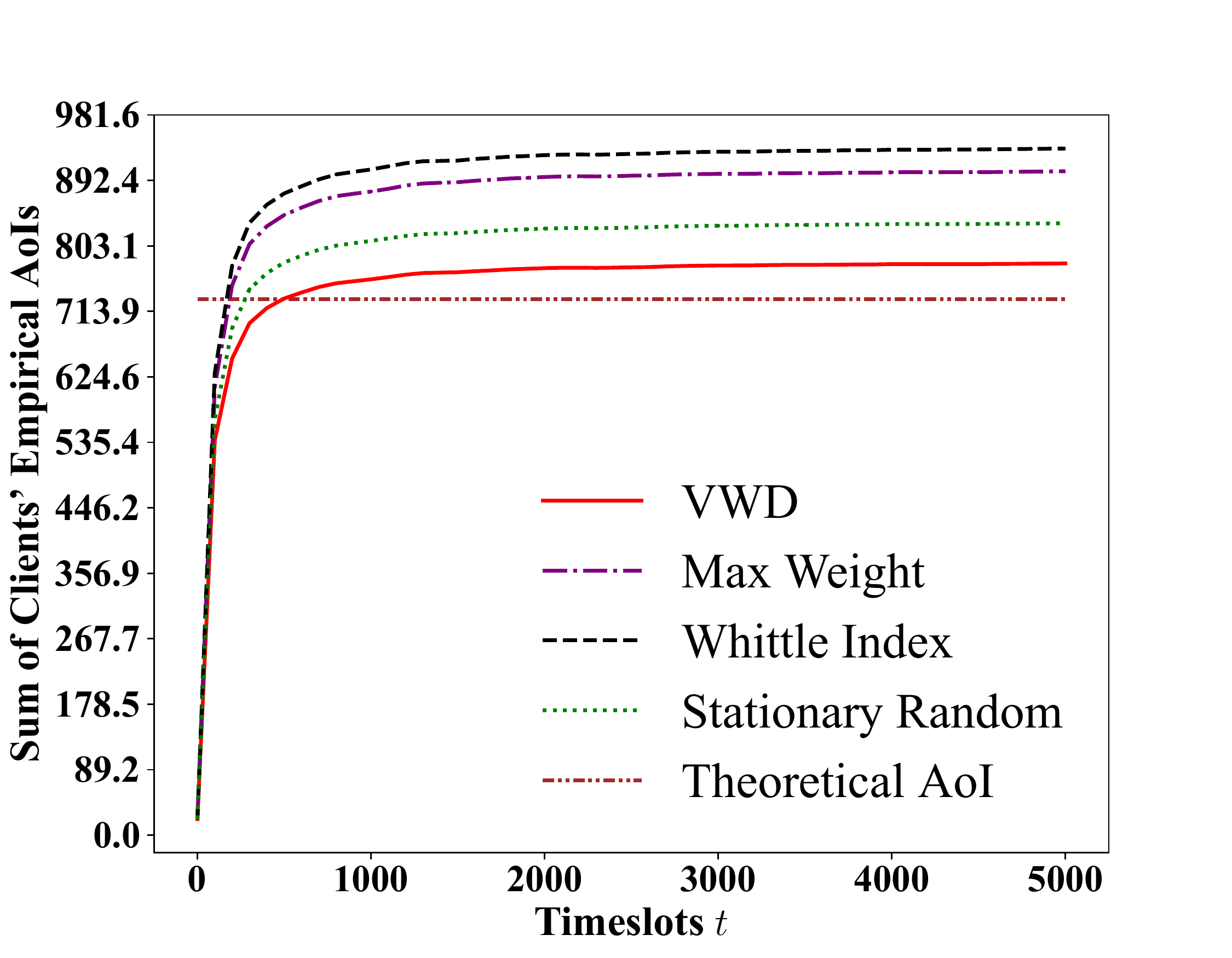}}
\subfigure[N = 20 Clients.]{\includegraphics[width=0.3\textwidth, height=1.8in]{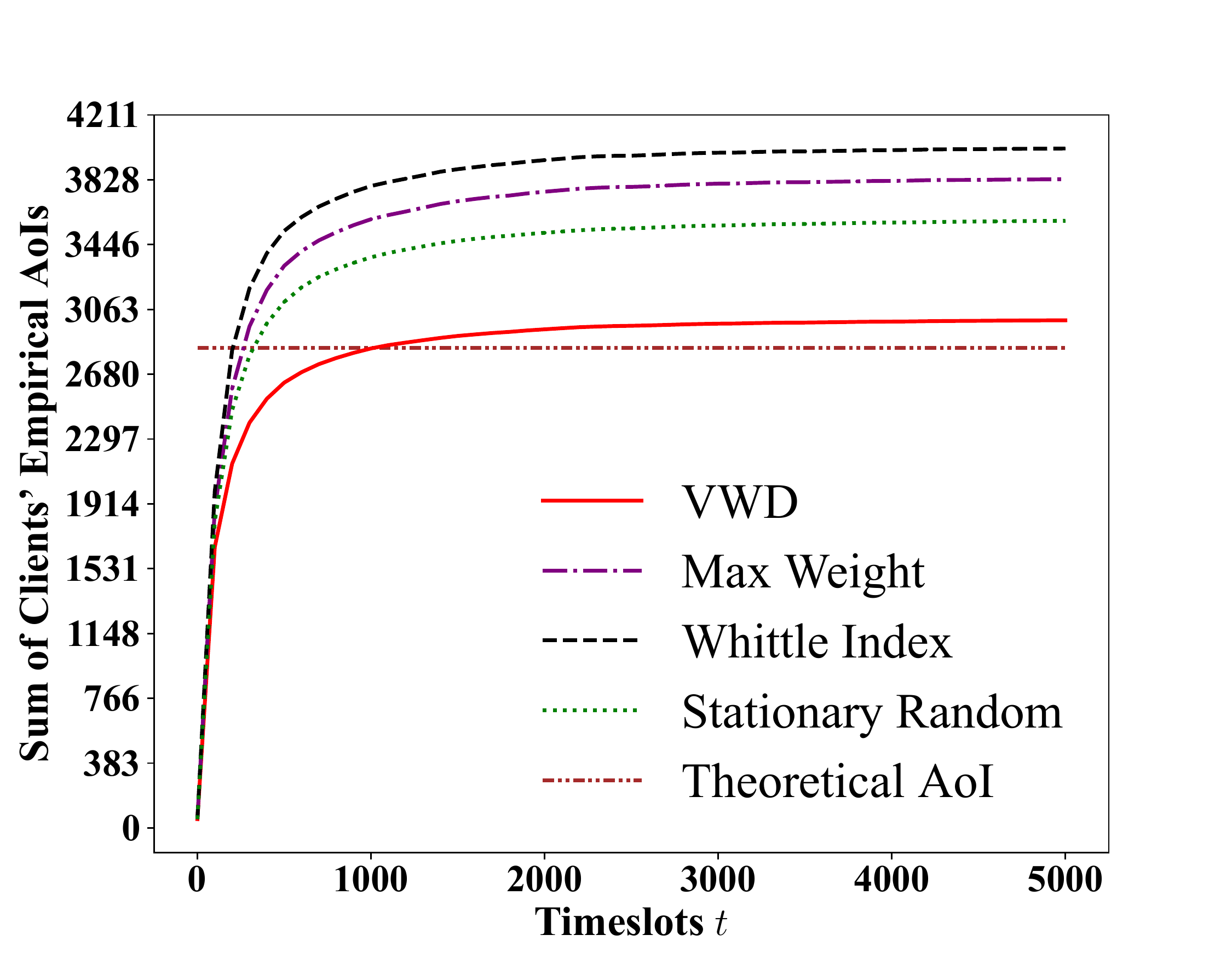}}
\end{center}
\caption{Total Weighted Empirical Age of Information (AoI) Averaged Over $1000$ Runs.}
\label{fig:weighted_aoi}
\end{figure*}

We also evaluate the convergence time of VWD. For each system, we randomly select two clients and plot their empirical means, i.e., the average of $\frac{d_i(t)}{t}$ across all independent runs, and empirical variances. Since the objective is to minimize the unweighted sum of AoIs, the optimal solution to (\ref{eq:objective_function}) has $\mu_i=\mu_j$ and $\sigma_i^2=\sigma_j^2$ for all $i\neq j$. We call the optimal $\mu_i$ and $\sigma_i^2$ obtained from solving (\ref{eq:objective_function}) the theoretical mean and the theoretical variance, respectively. The results are shown in Figs.~\ref{fig:mean_convergence} and \ref{fig:var_convergence}. It can be observed that both the empirical means and the empirical variances of clients indeed converge to their respective theoretical values. The empirical means converges to the theoretical ones very fast. On the other hand, it takes up to $355$ slots for the empirical variances to be within 0.001 from the theoretical variances. This convergence time may be the reason why the empirical AoI is larger than the theoretical one.

\subsection{Weighed Total AoI Evaluation}

We now present the results for the weighted AoI. The weights $\alpha_1, \alpha_2,\dots$ are randomly chosen from the range $(1,5)$ and independently from each other. All other parameters are the same as in the unweighted case. Fig.~\ref{fig:weighted_aoi} shows results for network sizes $N = \{5, 10, 20\}$.
VWD still outperforms other policies for all tested systems. Similar to the unweighted case, it can be observed that the superiority of VWD becomes more significant, and the gap between VWD and theoretical AoI becomes smaller, with more clients in the system. 

\section{Related Works} \label{sec:related}
There have been many works on scheduling in wireless networks for minimizing AoI. In~\cite{tripathi17}, the Tripathi and Moharir schedule over multiple orthogonal channels and propose Max-Age Matching and Iterative Max-Age Scheduling, which they show to be asymptotically optimal.
Hsu, Modiano and Duan~\cite{Hsu17} studied the problem of scheduling updates for multiple clients where the updates arrive i.i.d. Bernoulli, and formulate the Markov decision process (MDP) and prove structural results and finite-state approximations. In~\cite{Hsu18}, Hsu follows up this work by showing that a Whittle index policy can achieve near optimal performance with much lower complexity.
Sun et al.~\cite{Sun18} studied scheduling for multiple flows over multiple servers, and show that maximum age first (MAF)-type policies are nearly optimal for i.i.d. servers.
In~\cite{Talak20}, Talak, Karaman and Modiano study scheduling a set of links in a wireless network under general interference constraints.
The optimization of AoI and timely-throughput were studied in~\cite{Kadota18a,Lu18}. All of these works assume i.i.d channels.

There have been a limited number of works on Markov channel and source models related to AoI. In the recent work~\cite{pan20}, Pan et al. study scheduling a single source and choosing between a Gilbert-Elliott channel and a deterministic lower rate channel.
Buyukates and Ulukus~\cite{buyukates20} study the age-optimal policy for a system where the server is a Gilbert-Elliott model and one where the sampler follows a Gilbert-Elliott model.
In~\cite{nguyen19}, Nguyen et al. analyze the Peak Age of Information (PAoI) of a two-state Markov channel with differing cases of channel state information (CSI) knowledge.
Kam et al.~\cite{kam18} study the remote estimation of a Markov source, and they propose effective age metrics that capture the estimation error. Our work differs in that we focus on scheduling for multiple clients from a single AP over parallel non-i.i.d. channels. 

There have been some recent efforts on  studying short-term performance through Brownian motion approximation~\cite{hsieh16,hou17,hsieh20,guo21}, but each of them is limited to a specific channel model and a specific application. 

\section{Conclusion} \label{sec:conclusion}

In this paper, we presented a theoretical second-order framework for wireless network optimization. This framework captures the behaviors of all random processes by their second-order models, namely, their means and temporal variances. We analytically established a simple expression of the second-order capacity region of wireless networks. A new scheduling policy, VWD, was proposed and proved to achieve every interior point of the second-order capacity region. The framework utility is demonstrated by applying it to the problem of AoI optimization over Gilbert-Elliott channels. We derived closed-form expressions of second-order models for both Gilbert-Elliott channels and AoIs, and formulated the problem of minimizing weighted total AoI as an optimization problem over the means and temporal variances of delivery processes. The solution of this optimization problem can then be used as parameters for VWD. Simulation results show that VWD achieves much smaller weighted total AoI than other policies.

\newpage
\bibliographystyle{IEEEtran}
\bibliography{reference}
\end{document}